\newtheorem{proposition}{Proposition}[section]
\newtheorem{definition}{Definition}[section] 
\newtheorem{corollary}{Corollary}[section] 
\newtheorem{remark}{Remark}[section]  
\newtheorem{example}{Example}[section]
\date{}
\newcommand{\ket}[1]{|#1\rangle}
\newcommand{\bra}[1]{\langle #1|}
\DeclareMathOperator{\variety}{Var}
\title{\bf Quantum Gates from Wolfram Model Multiway Rewriting Systems} 
\author{Furkan Semih Dündar$^{1, 2, 3, }$\footnote{\url{f.semih.dundar@gmail.com}}  {} {}   Xerxes D. Arsiwalla$^{3, }$\footnote{\url{x.d.arsiwalla@gmail.com}} {} {}  Hatem Elshatlawy$^{3,  }$\footnote{\url{hatemelshatlawy@gmail.com}}  \\  
{}  \\
{\it \small $^{1}$Department of Mechanical Engineering, Faculty of Engineering, Amasya University, Turkey}\\ 
{\it \small $^{2}$Department of Physics, Faculty of Science, Sakarya University, Turkey}\\
{\it \small $^{3}$Wolfram Institute for Computational Foundations of Science, Illinois, USA}    
}
\begin{document}

\maketitle

\begin{abstract}

We show how representations of finite-dimensional quantum operators can be constructed using nondeterministic rewriting systems. In particular, we investigate Wolfram model multiway rewriting systems based on string substitutions. Multiway systems were proposed by S.~Wolfram as generic model systems for multicomputational processes, emphasizing their significance as a foundation for modeling complexity, nondeterminism, and branching structures of measurement outcomes. Here, we investigate a specific class of multiway systems based on cyclic character strings with a neighborhood constraint - the latter called Leibnizian strings. We show that such strings exhibit a Fermi-Dirac distribution for expectation values of occupation numbers of character neighborhoods. A Leibnizian string serves as an abstraction of a $N$-fermion system. A multiway system of these strings encodes causal relations between rewriting events in a nondeterministic manner. The collection of character strings realizes a $\mathbb{Z}$-module with a symmetric $\mathbb{Z}$-bilinear form. For discrete spaces, this generalizes the notion of an inner product over a vector field. This admits a discrete analogue of the path integral and a $S$-matrix for multiway systems of Leibnizian strings. The elements of this $S$-matrix yield transition amplitudes between states of the multiway system based on an action defined over a sequence of Leibnizian strings. We then show that these $S$-matrices give explicit representations of quantum gates for qubits and qudits, and also circuits composed of such gates. We find that, as formal models of nondeterministic computation, rewriting systems of Leibnizian strings with causal structure encode representations of the CNOT, $\pi/8$, and Hadamard gates. Hence, using multiway systems one can represent quantum circuits for qubits.

\vspace{2em}  
\textbf{Keywords:} Nondeterministic Rewriting Systems, Leibnizian Strings, Causal $N-$Fermions, Discrete State Spaces, Representation of Quantum Operators.
\end{abstract}

\clearpage

\tableofcontents

\clearpage

\section{Introduction}

Symbolic systems broadly constitute a framework of abstract entities, or symbols, and a collection of rules for manipulating these symbols. That is, a syntax equipped with a generative grammar or an algebra governing the composition of symbols. Although these symbols and their operations are purely syntactic, in practical use cases, they often refer to (in terms of formally admissible semantics) specific knowledge or information about objects and relations in the world (real or virtual). Common examples of symbolic systems are formal language, logic, algorithms, just to name a few. In this work, we focus on computational systems, in particular, nondeterministic rewriting systems based on string substitutions. Although much is known about these systems as formal models of computation \cite{huet1980confluent, dershowitz1990rewrite, baader1998term, bezem2003term}, here we seek to explore how rewriting systems encode representations of finite-dimensional quantum operators. A symbolic representation of the latter is often useful for the design of algorithms and protocols in quantum information and computation. 

Prior approaches towards a symbolic representation of finite-dimensional quantum theory include category-theoretic process theories such as `categorical quantum mechanics' (CQM) based on string diagrams (where the ``strings'' of CQM refer to connectors between operators) \cite{abramsky2009categorical, coecke2018picturing}, and the related ZX-calculus \cite{coecke2011interacting, coecke2012tutorial} (one may also consider a multiway system of ZX diagrams, as shown in \cite{gorard2020zx, gorard2021zx, gorard2021fast}). The interesting feature about such symbolic approaches to quantum theory is that they offer a formulation that is independent of any background spacetime geometry. Instead, they make use of the more fundamental notion of causal order. This approach has led to various diagrammatic calculi that have proven useful for quantum computing protocols \cite{coecke2011interacting, coecke2018picturing, zx_qudit_Wang, zx_qudit_Poor}. This is very much the spirit of the computational paradigm we espouse here, which is  background-independent and makes use of the causal structure of the rewriting system. On the other hand, our formalism differs from some of the aforementioned ones in two important ways: (i) rather than working with instantaneous system configurations, here we consider a statistical mechanics approach based on an ensemble of paths referring to sequences of rewriting events; and (ii) the building blocks of our formalism are not themselves quantum states or operators (or direct references to those), but instead are syntactic entities of an abstract computational system. 

More specifically, we work with nondeterministic rewriting systems based on cyclic character strings. These strings are updated following all specified rewriting rules, and in all possible orders. Such a system is called a multiway rewriting system of character strings. The term ``multiway system'' was coined by S.~Wolfram in \cite{NKS}, and also refers to more generic rewriting systems based on graph and hypergraph rewriting. Rewriting systems are models of computation that are equivalent to Turing machines, and are commonly used in theorem proving protocols  \cite{dershowitz1990rewrite, baader1998term, bezem2003term}.  In \cite{wolfram2020} hypergraph rewriting systems were proposed as the computational basis underlying contemporary theories of physics. Subsequently, a mathematical formalization based on higher category theory established the construction of topological spaces from multiway rewriting systems \cite{arsiwalla2020homotopic, arsiwalla2021homotopies, arsiwalla2024pregeometric}. A review  of the implications and challenges of this computational approach for reformulating contemporary theories of physics can be found in   \cite{arsiwalla2023pregeometry, rickles2023ruliology, elshatlawy2025towards, rickles2025quantum}.

Initial investigations seeking quantum theory from rewriting systems have been discussed in \cite{wolfram2020, gorard2020some, gorard2020zx, gorard2021zx}. In this work, we go deeper into the specific issue of extracting finite-dimensional quantum operators from multiway systems. We solve this problem explicitly and compute representations of quantum gates from multiway data.  For this, we consider an ensemble of cyclic character strings. Furthermore, we restrict ourselves to the so-called \emph{Leibnizian} strings. These are inspired by a monadic interpretation of discernibility. We make use of a complexity measure on strings known as the Barbour-Smolin-Deutsch variety or the \emph{BSD variety}, which was first proposed in \cite{barbour1992extremal}, and more recently elaborated in \cite{dundar2024case}. The ensemble of Leibnizian strings turns out to exhibit a Fermi-Dirac distribution for expectation values of occupation numbers referring to string neighborhoods. By considering only the Leibnizian strings in the multiway system, we study paths on the multiway graph formed by sequences of state transitions. We consider a summation over  paths formed by sequences of such strings. This serves as a discrete analogue of the quantum mechanical path integral over trajectories in phase space. Using this, we then construct $S$-matrices whose entries consist of transition amplitudes weighted over an ensemble of paths over the multiway graph based on an action defined over a sequence of Leibnizian strings. These $S$-matrices provide us with a representation of unitary operators.

The $S$-matrices we construct here capture the underlying statistical mechanics involved in the update sequences of a nondeterministic rewriting system. The nondeterministic causal structure (which itself follows from the rewriting rules) plays a crucial role for the purposes of determining transition amplitudes between multiway states. This suggests a statistical perspective on nondeterministic computation. Even though this work investigates only the very specific aspect of representations of quantum operators from nondeterministic computational systems, it is worth noting the broader significance of computational approaches that probe the foundations of quantum theory. In particular, the work of G.~`t Hooft  \cite{thooft, t2010classical, hooft2023ontological} and C.~Wetterich   \cite{wetterich2022probabilistic, wetterich2024probabilistic, kreuzkamp2025quantum}, both of whom have proposed the emergence of quantum theory from cellular automata, is the subject of considerable interest. The former advocates fundamental determinism, while the latter favors a statistical mechanics approach based on probabilistic cellular automata, and posits that quantum theory emerges from an underlying classical statistical system.

The paper is organized as follows. Section~\ref{sec:prelim} provides preliminary 
definitions and concepts. In Section~\ref{sec:Leibnizian_strings_info_theory}, we 
define Leibnizian strings and develop an information-theoretic characterization of 
their structure. Section~\ref{sec:nd_rewrite} introduces the Wolfram model, abstract rewriting systems and multiway graphs. Section~\ref{sec:phys_max_var_paths} identifies physical and 
maximal variety paths. In Section~\ref{sec:fermionic}, we demonstrate that Leibnizian strings exhibit Fermi-Dirac statistics for occupation numbers of monadic views. 
Section~\ref{sec:s_matrix} develops the path-sum formalism and constructs $S$-matrices 
for multiway evolution. Section~\ref{sec:applications} presents applications to 
quantum computing, demonstrating how abstract computational systems can encode representations of finite-dimensional quantum gates, including for qubits and qudits. Finally, Section~\ref{sec:conclusion} concludes with a discussion of implications and future directions.

\section{Preliminaries}\label{sec:prelim}

In this Section we provide preliminary materials from Ref.~\cite{barbour1992extremal}. Sometimes we use a modified notation compared to the cited work.

\begin{definition}[Neighborhood of a character \cite{barbour1992extremal}]\label{defn:neigh_max}
    For given position ($i$) and radius ($m$) the neighborhood of the character at position ($i$) with radius ($m$) is the substring from position $i-m$ to $i+m$ considering cyclic topology, including the letter at position ($i$). If the string length is $N$, the maximum value of $m$ is as follows:

    \begin{equation}
        m_{\text{max}} = \begin{cases}
            N/2-1 & N \text{ is even}\\
            (N-1)/2 & N \text{ is odd}
        \end{cases}
    \end{equation}
\end{definition}

\begin{definition}[String isomorphism \cite{barbour1992extremal}]
    Given two strings they are deemed isomorphic if they are the same or one is the mirror image of the other.
\end{definition}

\begin{definition}[Relative indifference \cite{barbour1992extremal}]
    Relative indifference is calculated given two distinct positions in a string. It is the smallest radius of neighborhoods of these characters such that the neighborhoods are \emph{not} isomorphic. We denote the relative indifference between characters at positions $i,j$ as $r_{ij}$.
\end{definition}

\begin{definition}[Leibnizian strings \cite{barbour1992extremal}]
    A string is \emph{Leibnizian} if all character positions are mutually 
    distinguishable by their neighborhoods. Specifically, for every pair of 
    distinct positions $i \neq j$, there exists some radius $m \leq m_{\max}$ 
    such that the neighborhoods of radius $m$ at positions $i$ and $j$ are 
    non-isomorphic. A string is \emph{not} Leibnizian if two or more positions 
    have isomorphic neighborhoods at all radii.
\end{definition}

\begin{example}
    Consider the cyclic string ``AAABBB''. The first and the third A's share the same neighborhood. This is also the case of the first and the third B's. Hence these positions are 
    indistinguishable, so ``AAABBB'' is \emph{not} Leibnizian.
\end{example}

\begin{example}
    In contrast, consider ``AABABB''. At the maximum radius $m=2$ (in order to determine whether a word is Leibnizian it is enough the check the maximum radius neighborhoods):
    \begin{itemize}
        \item Position 1 (A): neighborhood = ``BBAAB''
        \item Position 2 (A): neighborhood = ``BAABA''
        \item Position 3 (B): neighborhood = ``AABAB''
        \item Position 4 (A): neighborhood = ``ABABB''
        \item Position 5 (B): neighborhood = ``BABBA''
        \item Position 6 (B): neighborhood = ``ABBAA''
    \end{itemize}
    All six neighborhoods are distinct, so ``AABABB'' \emph{is} Leibnizian. The subtle point here is that although the entire  string is cyclic, the neighborhoods are \emph{not}.
\end{example}

The motivation for the definition of Leibnizian strings comes from the monadology \cite{monadology} of Leibniz. In his terms \cite{monadology}:

\begin{quote}
  The monad [\ldots] is nothing else than a simple substance, which goes to make up composites; by simple we mean without parts.
\end{quote}

Due to the identity of indiscernibles principle of Leibniz \cite{sep-identity-indiscernible} if two objects share the same properties then they are the same. For character strings, the properties are the neighborhoods of each position in a string. For that purpose, for Leibnizian strings we require that each neighborhood of characters is distinguished from the neighborhood of other characters. Thus, Leibnizian strings can model physical systems interpreted as consisting of monads. Monads themselves can be seen as `metaphysical atoms'. Leibnizian strings will be of prime importance in our work.

\begin{definition}[Absolute indifference \cite{barbour1992extremal}]
    Absolute indifference is calculated for a single position in a string. It is the maximum of relative indifference values between the given position ($i$) and other positions in the string. It is expressed as $a_i = \max\{r_{ij} | j \neq i\}$.
\end{definition}

Next, we discuss the `BSD Variety', which was introduced in \cite{barbour1992extremal} as a complexity measure of a string of characters. The term `BSD' refers to the names ``Barbour-Smolin-Deutsch'', and was coined in this form in \cite{dundar2024case}  (since the original paper by Barbour and Smolin  \cite{barbour1992extremal} cites a private communication by Deutsch, dated 1989, as the suggestion for that specific variety function -  Ref.~[30] in \cite{barbour1992extremal}).

\begin{definition}[BSD Variety \cite{barbour1992extremal}]
    Given a Leibnizian string the BSD variety is calculated as $\variety = \sum_i 1/a_i$.
\end{definition}

\begin{example}
    The string ``AABABB'' is Leibnizian. The absolute indifference values of characters are $2,2,1,1,2,2$. Hence, its BSD variety is $4$.
\end{example}

This definition is somewhat different from the one recently used by Smolin in Refs.~\cite{smolin2021, smolin2022}. In our study, since we consider only the Leibnizian strings, we set the variety value of a \emph{non}-Leibnizian string to zero. However, in more general cases, one may sum over $a_i \neq 0$ and neglect the $a_i = 0$ terms. The case $a_i = 0$ arises when the full neighborhood of a character in position $i$ is the same as that of some other character. In this way,  one may also say that $a_i \to \infty$ and these terms do not contribute to the sum to calculate the BSD variety. However, in our algorithm, we set $a_i = 0$ in non-Leibnizian strings to store a numerical value on the computer.

\section{Information-Theoretic Aspects of Leibnizian Strings \& their Enumeration}\label{sec:Leibnizian_strings_info_theory}

In this Section, we are concerned with information-theory related concepts for Leibnizian strings and the number of Leibnizian strings.

\subsection{Information-Theoretic Considerations}

A cyclic character string is deemed Leibnizian if no two positions exhibit isomorphic local neighborhoods even when the whole string is covered. Let us give two examples: ``AAABBB'' is a non-Leibnizian string, whereas ``AABABB'' is Leibnizian. This definition prevents redundant symbolic structures, ensuring a meaningful interpretation of characters as \emph{monads} \cite{monadology} in Leibnizian terms, via the principle of identity of indiscernibles~\cite{sep-identity-indiscernible}. Because every neighborhood is unique in Leibnizian strings, they have in some sense more information than non-Leibnizian strings. For a non-Leibnizian string, about half of the string is redundant. For a detailed account of being a Leibnizian string and the calculation of (BSD) variety, please refer to \cite{barbour1992extremal,dundar2024case}. Readers may refer to \cite{barbour2003deep} as a complementary reading on Leibnizian ideas. Let us begin by defining the Shannon entropy.

\begin{definition}[Shannon entropy of a string of characters \cite{shannon_entropy}]
    For a string of length $n$ where letters are from an alphabet $\{\lambda_i|i=1,\cdots,\nu\}$ where each letter has probability $p_i = \# \lambda_i/n$ of occurrence the Shannon entropy is defined as the sum $H = - \sum_{i = 1}^\nu p_i\log_2 p_i$.
\end{definition}

By changing the base of logarithm in the definition above, one can obtain information in other untis, such as \emph{nats} for natural logarithm. As is seen, the Shannon entropy is concerned only with the frequency of letters occurring in a string. Hence, it \emph{cannot} reflect the complex structure of Leibnizian strings in terms of local neighborhoods of characters. On the other hand, the \emph{conditional} Shannon entropy (see Ref.~\cite{gray_entropy} for a discussion) is more useful in order to define the information content of a string.

\begin{definition}[Conditional Shannon entropy \cite{gray_entropy}]
    The \emph{conditional} Shannon entropy is defined as $H(Y|X) \equiv H(X,Y) - H(X)$. The term $H(Y|X)$ is the conditional entropy of $Y$ given $X$, the term $H(X,Y)$ is the joint entropy and $H(X)$ is the Shannon entropy. 
\end{definition}

Using the definition of conditional Shannon entropy we just provided, let us define it in the context of character strings.

\begin{definition}[Conditional Shannon entropy of a string of characters]
    Let $H(X)$ be the Shannon entropy of the whole string and $H(X,Y)$ be the joint entropy. In order to calculate $H(X,Y)$, we consider a list of two consecutive letters in the string. Hence $H(X,Y)$ is the Shannon entropy of a list consisting of two-letter `characters.'
\end{definition}

We considered random Leibnizian strings of varying lengths and calculated the BSD variety and the conditional Shannon entropy. The results are provided in Figure~\ref{fig:cse_var}. As we can see, there is a correlation between these two concepts. This key observation suggests that the information content of a string is positively correlated with the BSD variety.

\begin{figure}
    \centering
    \includegraphics[width=\linewidth]{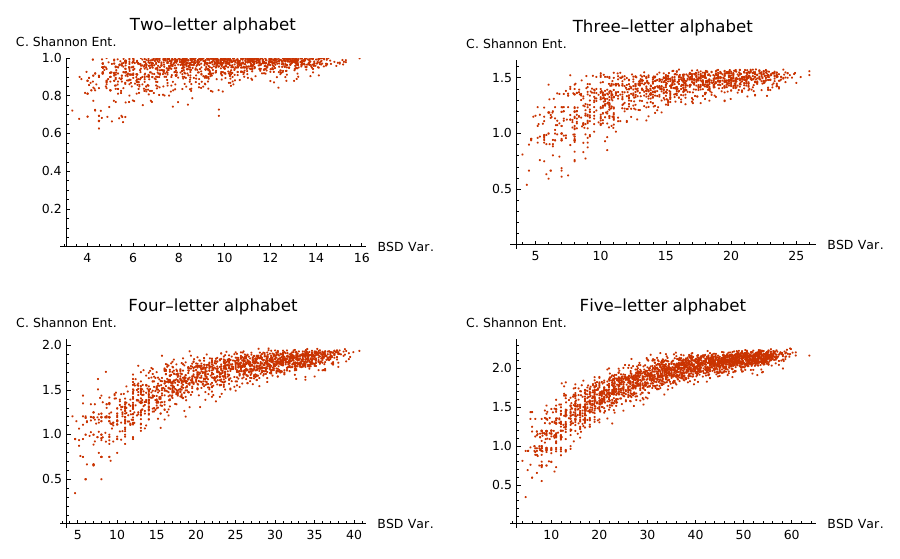}
    \caption{The plot of BSD variety vs conditional Shannon entropy for a set of Leibnizian strings for different alphabets. It is seen that BSD variety is positively correlated with the information content of words. Two-letter alphabet: 1281 words of length between 8--50. Three-letter alphabet: 1289 words of length between 8--50. Four-letter alphabet: 1704 words of length between 8--65. Five-letter alphabet: 2490 words of length between 8--90.}
    \label{fig:cse_var}
\end{figure}

\subsection{Enumeration of Leibnizian strings}

The number of Leibnizian strings is countably infinite. The \emph{countably} part is straightforward. However, we need to show that they are infinitely many. For a letter $\lambda$ we denote by $\lambda^n$ the repetition of letter $\lambda$ $n$ times. For example $A^2 = AA$ and $A^3 = AAA$ and so on. In this subsection, we consider an alphabet of letters where the number of distinct letters is $\nu$. Hence, the alphabet is the set $\{\lambda_i | i = 1,\cdots,\nu \}$.

\begin{definition}[Fractal words]
    A fractal word is a word of the form \\ $\otimes_{i=1}^n \lambda_1^i \cdots \lambda_\nu^i$ parameterized by the positive integer $n$.
\end{definition}

\begin{example}[Two letter alphabet]
    Let us consider a two letter alphabet: $\{A,B\}$. The fractal words for $n = 1,2,3$ are  ``AB'', ``ABAABB'' and ``ABAABBAAABBB'' respectively. 
\end{example}

\begin{proposition}
    \label{thm:fractal_word_leibnizian}
    All fractal words are Leibnizian for $n > 1$.
\end{proposition}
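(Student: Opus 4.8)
The plan is to prove the contrapositive in a sharpened form: I will show that for $n>1$ (and $\nu\ge 2$, since a one-letter alphabet is degenerate) the fractal word $W_n$, viewed as a cyclic string on $\mathbb{Z}/N\mathbb{Z}$ with $N=\nu\,n(n+1)/2$, admits \emph{no} nontrivial rotational or reflectional symmetry, and that a failure of the Leibnizian property would force exactly such a symmetry. The starting point is the reduction already used in the examples above: a string is Leibnizian iff its radius-$m_{\max}$ neighborhoods $V_0,\dots,V_{N-1}$ are pairwise non-isomorphic, since if two maximal neighborhoods agree (resp.\ are mirror images) then so do all smaller ones. The second ingredient is the run structure of $W_n$: writing $c_k$ for the character at position $k$, the maximal blocks of equal letters are exactly $\lambda_j^{\ell}$ for each letter $\lambda_j$ and each length $1\le\ell\le n$, each occurring once, because adjacent runs never merge ($\lambda_\nu\neq\lambda_1$ handles both the inter-block joints and the cyclic wrap-around). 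In particular, for each letter the \emph{unique} longest run is $\lambda_j^{\,n}$.

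First I would show that two distinct positions $p\neq q$ with $V_p\cong V_q$ force a genuine global symmetry. Since $2m_{\max}+1$ equals $N$ or $N-1$ according to the parity of $N$, the neighborhood $V_p$ covers either all positions or all but the single antipode $p+N/2$. If $V_p=V_q$ as strings, then with $d=q-p$ the relation $c_k=c_{k+d}$ holds on a window of at least $N-1$ consecutive positions; viewing $k\mapsto k+d$ as a permutation of $\mathbb{Z}/N\mathbb{Z}$, this equality is known on all edges of each of its cycles except at most one, which already forces $c$ to be constant along every cycle, so it holds for all $k$ and $W_n$ is invariant under the nontrivial rotation $k\mapsto k+d$. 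If instead $V_p$ is the reversal of $V_q$, then setting $\mu=p+q$ one gets $c_s=c_{\mu-s}$ on the same window; the only possibly-omitted position $p+N/2$ has its reflection partner $\mu-(p+N/2)=q+N/2$ \emph{inside} the window (because $p\neq q$), so no relation is truly missing and $W_n$ is invariant under the reflection $s\mapsto\mu-s$.

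Next I would rule both symmetries out using the uniqueness of the longest runs. Any symmetry permutes maximal runs while preserving their letter and length, hence must fix each $\lambda_j^{\,n}$ setwise. A nontrivial rotation cannot fix a proper arc of $n<N$ consecutive positions, contradicting invariance of $\lambda_1^{\,n}$ (note $n<N$ since $N\ge n(n+1)>n$). A reflection fixing a run setwise must have its axis through that run's center, so the axis would have to pass through the centers of all $\nu$ consecutive runs $\lambda_1^{\,n},\dots,\lambda_\nu^{\,n}$ forming the final block; but a reflection of the cycle has only two fixed points, which is impossible at once for $\nu\ge 3$, while for $\nu=2$ the two centers are $n$ apart whereas the two fixed points are antipodal, i.e.\ $N/2=n(n+1)/2$ apart, and $n(n+1)/2>n$ precisely when $n>1$. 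Either way no symmetry exists, contradicting the previous step, so all $V_i$ are pairwise non-isomorphic and $W_n$ is Leibnizian.

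I expect the reflection case to be the main obstacle, on two counts: one must observe that for even $N$ the single character omitted by the maximal neighborhood does \emph{not} create a genuinely missing symmetry relation, since its mirror partner already lies in the window; and one must establish the geometric incompatibility of the $\nu$ distinct run-centers with a single reflection axis, carefully distinguishing $\nu=2$ from $\nu\ge 3$ and tracking the parity of $N$. The rotational case, by contrast, reduces to a short cycle argument together with the uniqueness of the longest $\lambda_1$-run, and the $n=1$ exclusions are exactly the degenerate configurations (such as two adjacent length-one runs being antipodal) that this argument rightly fails to rule out.
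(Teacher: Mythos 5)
Your proof is correct, and it takes a genuinely different route from the paper's. The paper argues locally, for a two-letter alphabet: $A$'s lying in different runs $A^iB^i$, $A^jB^j$ are distinguished because the run of $A$'s containing the center of the neighborhood has different lengths $i \neq j$; $A$'s within one run are compared pairwise (the $r$-th against the $(i-r+1)$-th), with an explicit check that the distinguishing radius stays below $m_{\max}$, a separate check at $n=2$, and the assertion that larger alphabets follow obviously. You instead argue globally: starting from the same reduction to maximal-radius neighborhoods (which the paper itself invokes in an example), you show that an isomorphism between two maximal neighborhoods propagates to a full dihedral symmetry of the cyclic word --- and you handle the even-$N$ subtlety, where the window misses one antipodal position, correctly in both cases (for rotations via the cycle-minus-one-edge connectivity argument, for reflections via the observation that the missing relation is the mirror of one already inside the window) --- and then you exclude both symmetry types using the uniqueness of the longest run $\lambda_j^{\,n}$ of each letter, with the $\nu=2$ versus $\nu\ge 3$ reflection dichotomy worked out properly ($n$ apart versus $N/2=n(n+1)/2$ apart forces $n=1$). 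What your approach buys is uniformity and rigor: all alphabet sizes $\nu\ge 2$ and all $n>1$ are treated at once, with no special case at $n=2$ and no appeal to ``easy to verify'' radius estimates. What it gives up is the quantitative content of the paper's local analysis: the explicit distinguishing radii (hence bounds on absolute indifferences), which the paper subsequently wants in order to compute the BSD variety of fractal words; your symmetry argument certifies existence of distinguishing radii without exhibiting them. Two minor remarks: your insertion of the hypothesis $\nu\ge 2$ is necessary and tacitly assumed by the paper (for $\nu=1$ the fractal word is constant and the claim is false); and your closing comment on $n=1$ is apt, though note that under the paper's literal definition the word ``AB'' is already distinguished at radius $0$, so the paper's claim that it is non-Leibnizian relies on an unstated convention rather than on the definition as written --- this affects neither proof, since the proposition concerns only $n>1$.
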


\begin{proof}
    We will prove the proposition for a two letter alphabet. Its generalization to larger alphabets will be obvious once we present our method. Now we consider a two letter alphabet $\{A,B\}$ and $n > 1$. If $n = 1$ the word is just ``AB'' and hence non-Leibnizian. Suppose now that $n > 1$. Considering the block of $A$'s in $A^iB^i$ and $A^jB^j$ for $i < j$, when we compare the neighborhoods of $A$'s in these two separate substrings, we see that they are distinguished. The reason is that the neighborhood of $A$'s in the body $j$ must contain $j$ consecutive $A$'s and similarly the neighborhood in body $i$ must contain $i$ consecutive $A$'s. Because $i \neq j$ the $A$'s in two different blocks of substrings are distinguished.

    Hence, if the string is Leibnizian, it is enough that we check the letters in the same block of the substring. For that matter, we consider the part $A^iB^i$. Let us choose the first $A$ of this part. It is enough to check the neighborhoods of the first $A$ with the last $A$. These are distinguished in $m = i$. When we consider the $r$'th $A$ it is enough that we compare it with the $(i-r+1)$'th $A$ in the block $A^iB^i$. Those, too, are distinguishable. However, we need to check that the radius where these $A$'s are distinguished is smaller than $m_{max}$ defined in Definition~\ref{defn:neigh_max}. If $i$ is even, the maximum radius is $i/2 - 1 +i = 3i/2 - 1$ and if $i$ is odd, it is $(i+1)/2 - 1 + i = (3i - 1)/2$. It is easy to verify that the radius at which two distinct $A$'s are distinguished in the substring $A^iB^i$ is less than $m_{max}$ for $n > 2$, using the fact that the maximum of $i$ is $n$. Because the string length is even, we should have $m_{max} = (n+2)(n-1)/2$. If $n = 2$ we have the string ``ABAABB'' and it is Leibnizian. One can do a similar analysis of letters `B' and make sure that the fractal words are Leibnizian. That completes the proof.
\end{proof}

\begin{corollary}[Leibnizian words are infinitely many]
    Due to Proposition~\ref{thm:fractal_word_leibnizian} for any $n > 1$ the corresponding fractal word is Leibnizian. Hence, the cardinality of the set of Leibnizian words is $\aleph_0$.
\end{corollary}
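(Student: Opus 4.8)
The plan is to establish the cardinality $\aleph_0$ by proving the two standard inequalities separately: that the set of Leibnizian words is \emph{at most} countable, and that it is \emph{infinite}. The first bound is purely structural and uses nothing about Leibnizianity, while the second is exactly where Proposition~\ref{thm:fractal_word_leibnizian} does all the heavy lifting.

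For the upper bound, I would observe that every Leibnizian word is in particular a finite string over the fixed alphabet $\{\lambda_i \mid i = 1,\dots,\nu\}$ (or, if one permits $\nu$ to vary, over the countable union of all finite alphabets). The set of all finite strings over a finite---or even countable---alphabet $\Sigma$ is itself countable, being the countable union $\bigcup_{n \ge 1} \Sigma^n$ of the finite (respectively countable) sets of strings of each fixed length $n$. Since the Leibnizian words form a subset of this countable set, there are at most $\aleph_0$ of them, regardless of any further combinatorial constraint.

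For the lower bound, I would use the family of fractal words $w_n = \otimes_{i=1}^{n} \lambda_1^i \cdots \lambda_\nu^i$, each of which is Leibnizian for $n > 1$ by Proposition~\ref{thm:fractal_word_leibnizian}. The key point is that the assignment $n \mapsto w_n$ is injective: the block $\lambda_1^i \cdots \lambda_\nu^i$ has length $\nu i$, so $w_n$ has length $\sum_{i=1}^{n} \nu i = \nu\, n(n+1)/2$, which in the two-letter case is simply $n(n+1)$. This is strictly increasing in $n$, so distinct values of $n$ produce words of distinct length, hence distinct (and, since isomorphism preserves length, even non-isomorphic) words. This embeds the infinite set $\{n \in \mathbb{Z} \mid n > 1\}$ into the set of Leibnizian words, showing the latter is infinite.

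Combining the two bounds via the elementary fact that an infinite subset of a countable set is countably infinite yields cardinality exactly $\aleph_0$. The main thing to guard against is not the cardinality arithmetic---which is routine---but the airtightness of injectivity and applicability of the upper bound: one must confirm that distinct $n$ really give non-isomorphic words (immediate from the length computation) and that the alphabet is finite or at worst countable, so that $\bigcup_{n} \Sigma^n$ remains countable. Since Proposition~\ref{thm:fractal_word_leibnizian} already certifies Leibnizianity of the entire fractal family, no further reasoning about neighborhoods is needed here, and the corollary reduces to assembling these two observations.
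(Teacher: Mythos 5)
Your proposal is correct and follows essentially the same route as the paper: the paper likewise treats countability as immediate (Leibnizian words being finite strings over a finite alphabet) and derives infinitude from Proposition~\ref{thm:fractal_word_leibnizian} via the family of fractal words, one for each $n>1$. Your only addition is spelling out the injectivity of $n \mapsto w_n$ through the length formula $\nu\, n(n+1)/2$, a detail the paper leaves implicit.
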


Using Proposition~\ref{thm:fractal_word_leibnizian} one can also calculate the BSD variety of fractal words.

\section{Wolfram Model and Abstract Rewriting Systems}\label{sec:nd_rewrite}

In this Section we present a brief overview of the Wolfram model and abstract rewriting  (reduction) systems. 

The Wolfram model  \cite{NKS, wolfram2020} is a computational framework based on nondeterministic rewriting systems. This model attempts to provide a description of fundamental physical phenomena starting from abstract rewriting systems. It seeks to capture ways in which simple computational programs lead to complex symbolic systems relevant to physical structure.  This approach constitutes a constructivist formalization of the foundations of physics based on formal symbolic systems. This complements ongoing efforts in the category-theoretic foundations of physics and mathematics, for instance, in relation to the synthetic geometry and homotopy type theory program \cite{schreiber2013differential, fiorenza2013higher, program2013homotopy, shulman2021homotopy}.   
The mathematical foundations of the Wolfram model in terms of higher categories and homotopy type theory have been described in \cite{arsiwalla2020homotopic, arsiwalla2021homotopies,  arsiwalla2024pregeometric}. The philosophical underpinnings of the Wolfram model can be found in \cite{rickles2023ruliology, arsiwalla2023pregeometry, rickles2025quantum}.

The general framework of the Wolfram model being grounded in formal computation, it does away with continuum structures in favor of discrete descriptions, including those referring to space, time and matter. The model posits a combinatorial description based on causal relations between events, which makes manifest the computational architecture that underlies contemporary physics. 
The archetypical structures that appear within this framework are the so-called ``multiway systems''. These are nondeterministic abstract rewriting systems equipped with a causal structure. The term ``multiway'' refers to the fact that these systems instantiate all permissible applications of rewriting rules in all possible orders, thereby leading to multiple sequences of rewriting terms that are partially ordered by binary relations. The partial order realizes a causal structure. Depending upon the precise interpretation of the terms, Wolfram model multiway systems may be realized as rewriting systems over graphs, hypergraphs, character strings, or other symbolic structures. For what follows below, here we will mainly be concerned with multiway systems of character strings.

Now let us begin with the definition of abstract rewriting (reduction) systems.

\begin{definition}[Abstract rewriting (reduction) system  \cite{bezem2003term} ]
    An \emph{abstract reduction system} (ARS) is a structure $\mathcal A = (A,\{\to_\alpha | \alpha \in I\})$ consisting of a set $A$ and a set of binary relations $\to_\alpha$ on $A$, indexed by a set $I$.  For $\alpha \in I$, the relations $\to_\alpha$ are called \emph{reduction} or \emph{rewriting} relations. 
\end{definition}

Beginning with a single element and a set of rules $\{\to_\alpha\}_\alpha$, an abstract rewriting system can generate a whole class of elements that can be obtained via these rules. The application of rules is known as \emph{reduction} \cite{bezem2003term}.

A Wolfram model is any abstract rewriting system with a designated initial state. The archetypical structure of a Wolfram model is a \textit{multiway rewriting system}. 

\begin{definition}[Multiway rewriting system \cite{NKS, wolfram2020} ]
A \emph{multiway rewriting system} of a Wolfram model is 
an abstract rewriting system $\mathcal A = (A,\{\to_\alpha | \alpha \in I\})$ with a designated initial state $a_{ini} \in A$ that generates sequences of rewriting events upon application of all possible rules from the set $\{\to_\alpha | \alpha \in I\}$, in all possible orders, to every element in each sequence (whenever rule application is valid).
\end{definition}

The Wolfram model multiway system is often represented by a multiway graph. 

\begin{definition}[Multiway graph \cite{wolfram2020} ]
A \emph{multiway graph} or \emph{multiway evolution graph} is a directed graph corresponding to the evolution of an abstract rewriting system $\mathcal A = (A,\{\to_\alpha | \alpha \in I\})$, in which the vertices of the graph refer to elements $a \in A$, and directed edges ${a \to b}$ refer to the existence of a rewriting event that transforms element $a$ to element $b$. Every multiway graph is further specified by a root vertex, which corresponds to the initial element of the rewriting sequences.   
\end{definition}

See Figure~\ref{fig:ARS1} for an illustration of a string substitution (with character strings) multiway system with initial state ``BBBAAACC'', and two rewriting rules $\text{BA}\to\text{AB}, \; \text{CB}\to\text{BC}$. In this case, the multiway graph has finitely many elements, and the rewriting process will terminate. This is because after a certain number of rewriting events, the string will be sorted. In general, a multiway system needs not terminate though.

\begin{figure}
    \centering
    \includegraphics[width=\linewidth]{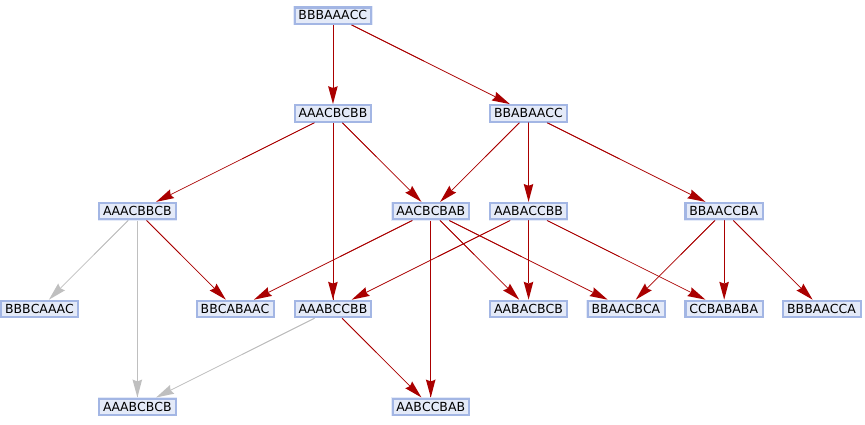}
    \caption{Initial section of the multiway system graph of string ``BBBAAACC'' with the rules: $\text{BA}\to\text{AB},\text{CB}\to\text{BC}$. The \emph{physical} multiway system that includes only the Leibnizian strings is highlighted in red.}
    \label{fig:ARS1}
\end{figure}

\section{Physical and Maximal Variety Paths}\label{sec:phys_max_var_paths}

Physical paths in the multiway system consist exclusively of Leibnizian strings. These paths evolve through a discrete substitution process. They can branch and merge within the multiway system.  

\begin{definition}[Physical paths \cite{dundar2024case}]
    A physical or a Leibnizian path in a multiway system is a path such that all the vertices it includes are Leibnizian.
\end{definition}

\begin{definition}[Maximal variety paths \cite{dundar2024case}]
    Maximal variety paths are physical paths in a multiway system of depth $d$ such that they maximize the sum of (BSD) variety values of its vertices.
\end{definition}

One needs to be careful that the maximal variety paths are \emph{not} additive. Consider maximal variety paths of depths $d$ and $d+1$. In these two cases, there may be string configurations in maximal variety paths of depth $d$ where they may not be present at depth $d+1$ maximal variety paths. On the other hand, physical (Leibnizian) paths are additive. This expression means that physical paths of depth $d+1$ for some $d$ can be obtained by continuation of physical paths of depth $d$ one step ahead.

For illustration purposes, we consider a small multiway system, see Figure~\ref{fig:mws_max_act_paths}. In order to define maximal variety paths, we consider the sum of BSD variety values on all physical paths and choose the physical paths with the maximum sum as maximal variety paths.

\begin{figure}
    \centering
    \includegraphics[width=\linewidth]{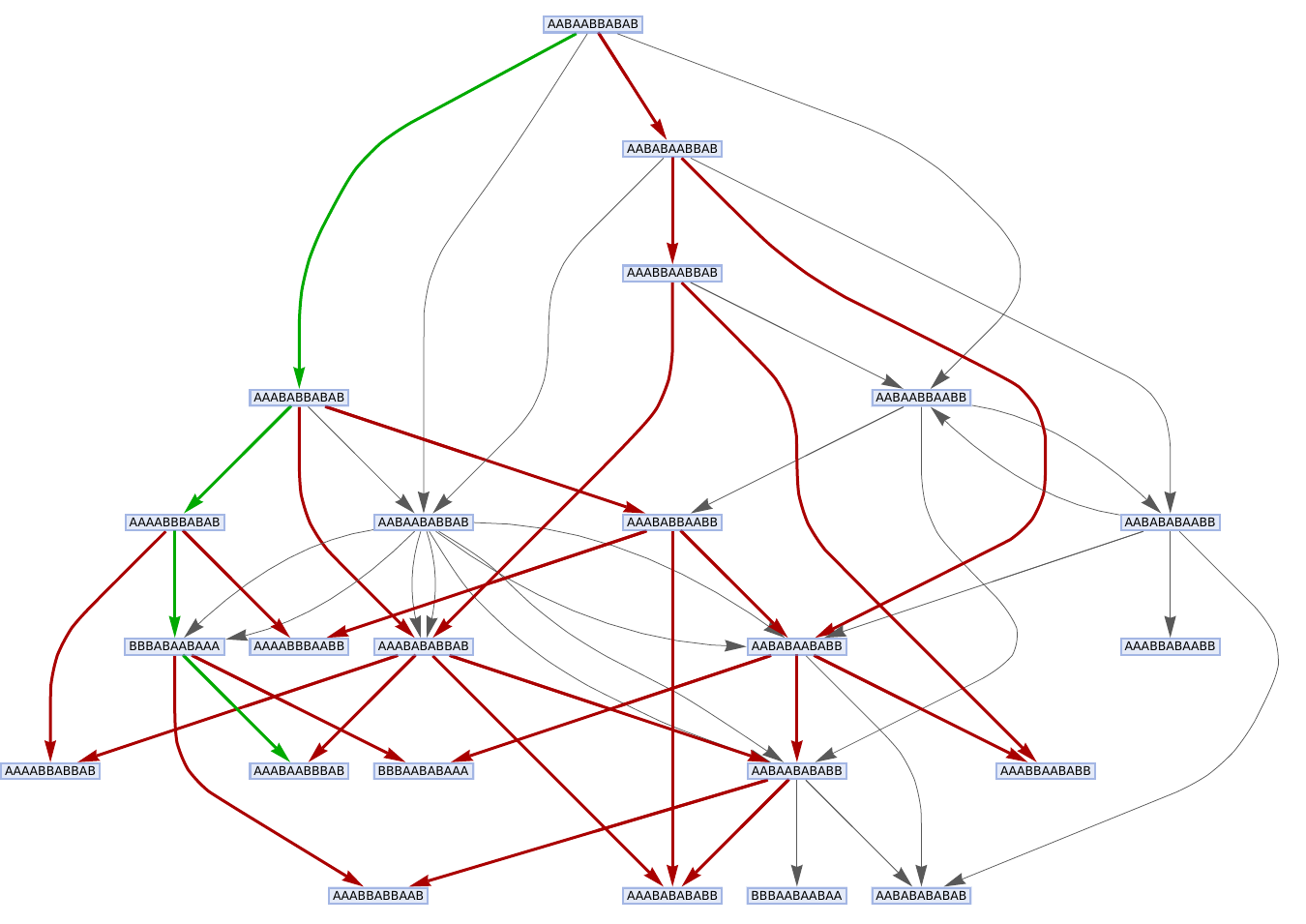} 
    \caption{Representation of initial part (depth 4) of the multiway system specified by the following: Initial state: ``AABAABBABAB''. Rewriting rule: BA$\to$AB. The paths highlighted in red are Leibnizian (physical) paths where as the path in green is the maximal variety path.}
    \label{fig:mws_max_act_paths}
\end{figure}

\section{Leibnizian Strings and Fermi-Dirac Statistics }\label{sec:fermionic}

Each monad is unique. This is dictated by Leibniz's principle of \emph{Identity of Indiscernibles}. For a detailed philosophical discussion of this principle, the esteemed reader may kindly consult \cite{sep-identity-indiscernible}. What distinguishes a monad from another one in the present string ontology is their local neighborhoods. The smallest neighborhood of a letter that distinguishes it from other neighborhoods can be said to be the \emph{view} of that monad. The term `view' is also used in \cite{smolin2021}. However, the context in \cite{smolin2021} is closer to  causal set theory. On the other hand, our use of the term `view' is closer to the discussion in  \cite{barbour1992extremal}). The view of a monad is its local neighborhood that has an absolute indifference value as its radius. Monads can be defined through their views, and a specific view can occur only once or never in  Leibnizian strings. In this sense, monads exhibit \emph{fermionic} behavior. We will soon elaborate on this.

The set of possible views consists of all possible combinations of letters in the alphabet, with radius $m=1,2,\ldots$ where the length of view for a specific radius $m$ is $2m+1$. For example, in a two letter alphabet the set of views of radius $m=1$ is 
\[  \{\text{AAA},\text{AAB},\text{ABA},\text{ABB},\text{BAA},\text{BAB},\text{BBA},\text{BBB}\}  \]
Views with radius $m=0$, that is, of only one letter, do not occur in Leibnizian strings. We denote a view of a monad with the letter $\phi$.

Now we consider a gas of Leibnizian strings of the same length, that is, $N$. We parameterize each character in a string by the $j$ parameter. In a way $j$ is the position of a character. We denote by $n_\phi$ the occupation number of view $\phi$. Because we only consider Leibnizian strings, $n_\phi$ can be zero or one. This property of Leibnizian strings will be the key to show that they exhibit fermionic behavior. Our guide in the rest of this Section will be Ref.~\cite{baierlein1999thermal} and will follow this source as regards the derivation of expected value of occupation number of a view.

The partition function of an ensemble of Leibnizian strings of length $N$ is denoted by $Z = \sum_l e^{-E_l/\kappa T} = \sum_l e^{-\beta E_l}$ where $\beta = 1/\kappa T$, the index $l$ goes over all the Leibnizian strings in the ensemble and $\kappa$ is Boltzmann's constant. For a specific view $\phi$ the expectation value of its occupation number is defined as:

\begin{equation}
    \langle n_\phi\rangle = \frac{1}{Z} \sum_l n_\phi e^{-\beta E_l}.
\end{equation}

Here we posit that the energy $E_l$ is proportional to the BSD variety of the corresponding string $\mathcal S_l$ and write $E_l = \gamma \variety(\mathcal S_l)$ where $\gamma$ is the proportionality constant that relates variety and energy. As a matter of notation, let us denote by $a_i$ the absolute indifference value of view $n_{\phi_i}$, that is, the radius of view $n_{\phi_i}$. If a Leibnizian string consists of a number of views $\{n_{\phi_i}\}_i$ where $1\leq i\leq M$, where $M$ is such that all views that can be found in a string of length $N$ are included. This will be a finite set of views. we can write its BSD variety as the sum:

\begin{equation}
    \variety(S_l) = \sum_i \frac{n_{\phi_i}}{a_i}.
\end{equation}

This observation will be important in what follows. We expand the sum to compute $Z$ in the calculation of $n_{\phi_1}$ as follows:

\begin{equation}
    \langle n_{\phi_1}\rangle = \frac{1}{Z} \sum_{n_{\phi_1}} \sum_{n_{\phi_2}} \cdots  n_{\phi_1} \exp\left[-\beta \gamma \left(\frac{n_{\phi_1}}{a_1}+\cdots +\frac{n_{\phi_M}}{a_M} \right)\right].
\end{equation}
under the condition $n_{\phi_1} + \cdots + n_{\phi_M} = N$. The triple dots denote summation operators over other possible views. The only contribution to the sum above comes from the value of $n_{\phi_1} = 1$, and we can write:

\begin{equation}
    \langle n_{\phi_1}\rangle = \frac{e^{-\beta \gamma/a_1}}{Z} \sum_{n_{\phi_2}} \cdots  \exp\left[-\beta \gamma \left(\frac{n_{\phi_2}}{a_2}+\cdots +\frac{n_{\phi_M}}{a_M} \right)\right].
\end{equation}

under the condition $n_{\phi_2} + \cdots + n_{\phi_M} = N - 1$. The idea is to write the sum above as follows:

\begin{equation}
    \sum_{n_{\phi_1}} \sum_{n_{\phi_2}} \cdots  (1 - n_{\phi_1}) \exp\left[-\beta \gamma \left(\frac{n_{\phi_1}}{a_1}+\cdots +\frac{n_{\phi_M}}{a_M} \right)\right]
\end{equation}

under the condition $n_{\phi_1} + \cdots + n_{\phi_M} = N - 1$. If we parameterize the partition function of an ensemble of Leibnizian strings of length $N$ by $Z(N)$, it will be seen that the expression above is equal to $Z(N-1) - \langle n_{\phi_1}\rangle_{N-1} Z(N-1)$ where the subscript $N-1$ in the expectation value indicates the fact that it is considered for an ensemble of strings of length $N-1$. Hence, we obtain the following useful expression:

\begin{equation}
    \langle n_{\phi_1}\rangle = \frac{e^{-\beta \gamma /a_1}}{Z(N)}\left(Z(N-1) - \langle n_{\phi_1}\rangle_{N-1} Z(N-1)\right)
\end{equation}

When considered for a single Leibnizian string if a view is found once, it will not be found in the sub-string when the corresponding letter is removed. However, we are considering an \emph{ensemble} of Leibnizian strings, so it can be found in others as well. What this means is that $\langle n_{\phi_1}\rangle_{N-1}$ is \emph{not} trivially zero. In the large $N$ limit, we can write $\langle n_{\phi_1}\rangle \approx \langle n_{\phi_1}\rangle_{N-1}$. Moreover, the \emph{chemical potential} ($\mu$) is expressed as $e^{\beta \mu} = Z(N-1)/Z(N)$. When we solve the equation above under these considerations, we obtain the final result of this Section:

\begin{equation}
    \langle n_{\phi_1}\rangle = \frac{1}{e^{\beta (\gamma/a_1 - \mu)} + 1} 
\end{equation}

Now we have shown that Leibnizian strings exhibit fermionic behavior. The important step from a Leibnizian string to this expression is to realize that monads are defined by their views. The expectation value of the occupation number is that of a view, and the corresponding energy-like term is $\gamma/a_1$. To iterate, $\gamma$ is the proportionality constant that relates the BSD variety to energy, and $a_1$ is the radius of that view, or its absolute indifference value, in other words. Similar calculations are also valid for the occupation number of other views. Hence, one can drop the subscript `$1$' in the Equation above.

\section{Multiway Path-Sums and the $S$-Matrix} \label{sec:s_matrix}

The multiway system with Leibnizian strings is a purely symbolic system. What is the physical abstraction of such a formal system? In this regard, a Leibnizian string consisting of $N$ characters can be interpreted as a $N$-fermion system, where each character (fermion) occupies a distinct neighborhood (state). This serves as the analogue of the Pauli exclusion principle for character strings. Moreover, besides the fermion-like strings, the multiway system carries a causal structure, which comes from state transitions facilitated via a given set of rewriting rules. Hence, transitions between strings (which constitute states of the multiway system) can be thought of as scattering processes between in-coming and out-going $N$-fermion states. 

To obtain the above-mentioned scattering amplitudes, we shall construct a multiway path-sum based on an action defined over a sequence of Leibnizian strings, and then use that to compute $S$-matrices corresponding to unitary transformations.

First, we define the action on Leibnizian paths in the multiway system: 
\begin{definition}[Action for Leibnizian paths (Adapted from \cite{dundar2024case})]
    Given a Leibnizian path of the form of $a_1 \to a_2 \to \cdots \to a_n$, where the $a_i$ are nodes/vertices on the multiway graph, the action along this path is defined as  
    \[ {\cal S} \, (a_1 \to a_2 \to \cdots \to a_n) = - \sum_{i=1}^{n-1} \variety (a_i) \] 
    for $n>1$.
    \label{def:action_path_sum}
\end{definition}

For example, given a path from $a\to b$, the action $\mathcal S(a\to b) = -\variety(a)$; or, given $\mathcal S(a\to b\to c) = -\variety(a) - \variety(b)$. The use of minus sign in the definition will become clear in the next subsection. 

Transition amplitudes between string states are computed using a discrete path-sum defined akin to the Feynman path integral. Path integrals over multiway systems have been discussed before in \cite{wolfram2020, gorard2020some}. The construction we use here is based on the BSD variety over a sequence of Leibnizian strings. Hence, the summand in the path-sum is $\exp(i\mathcal  S/k)$, where $\mathcal S$ is the action defined above. A real dimensionless coupling constant ($k$)  generalizes the reduced Planck’s constant ($\hbar$). In what follows, the elements of the $S$-matrix encode transition amplitudes that capture the evolution of strings under symbolic substitution rules.

Next, we define the $S$-matrix as follows:
\begin{definition}[$S$-matrix]\label{def:s_matrix}
    The elements of the S-matrix are defined as 
    \begin{equation}
       \langle \text{out} | \text{in} \rangle =\sum_{\gamma} \omega(\gamma)\, e^{i \mathcal S(\gamma)/k} 
    \label{tranamp}
    \end{equation}
    This is the transition amplitude of obtaining $\ket{\text{out}}$ beginning with $\ket{\text{in}}$, the sum is evaluated on every physical path $\gamma$ connecting `in' and `out' words, $\mathcal S(\gamma)$ is defined in Definition~\ref{def:action_path_sum} and $\omega(\gamma)$ is a suitable $\mathbb R$ valued weight function, which may be constant, for paths. Normalization of transition amplitudes can be done by choosing the weight $\omega(\gamma)$ accordingly. The transition amplitude vanishes for words that are not connected by causal paths in the multiway system since the weights $\omega$ are identically zero in that case. 
\end{definition}

Note that the use of the term `$S$-matrix' here is not intended in the sense of quantum field theory. Instead, we used the term in order to indicate  transitions between in-words and out-words corresponding to string transformations.

\begin{remark}
When we think about the path integral representation of standard non-relativistic quantum mechanics (for example, see \cite{sakurai}), the path weights are the same for every path, and it is the action whose evaluation will be different on different paths. However, in our path-sum here, the path weights are $\mathbb R$-valued free parameters. As we shall see in the case of the Hadamard gate, we also need \emph{negative} weight values. The only case where we can have all $\omega(\gamma) > 0$ on existing paths is the case of what we call the non-interacting case, meaning that there are only single paths from a definite in-word to a definite out-word. In that sense, the combinatorial path-sum defined here is a more general structure than the strict counterpart of a path integral.  
\end{remark}

The in-words and out-words are elements of a discrete space that generalizes conventional vector spaces. More specifically, the character strings we use above are elements of a ${\mathbb Z}-$module. Modules over rings are natural algebraic structures that go beyond vector spaces over fields. For the discrete setting employed here, this is realized via direct sums of abelian groups. Consider a string $w$ of length $l$, with characters $c_i$ in position $i \in \{ 1, 2, ... , l \}$
\begin{equation*}
    w = c_1 \, c_2 \, \cdots c_l
\end{equation*}
Each $c_i$ is an element of a finite character set of cardinality $n$. Upon endowing this set with modular arithmetic, it maps to the cyclic abelian group ${\mathbb Z}_n$. Then, we have:  
\begin{equation*}
    w \in \bigoplus_{i = 1}^l \, {\mathbb Z}_n
\end{equation*}
where the right-hand side is a direct sum of $l$ copies of ${\mathbb Z}_n$, which itself is a ${\mathbb Z}-$module of order $n^l$. If the multiway system consists of strings of different lengths, then the choice of $l$ is made based on the string of maximum length (which is always a finite positive integer). Furthermore, for our purposes here, we only work with character strings that are symmetric under cyclic permutations. Therefore, one has to additionally quotient the module above with cyclic permutations over strings. 

Now, over any ${\mathbb Z}-$module $M$, one can define a symmetric  ${\mathbb Z}-$bilinear complex-valued form:  
\begin{equation*}
    \langle \cdot \, , \, \cdot \rangle \, : \, M \times M \to  {\mathbb C}
\end{equation*}
This replaces the notion of an inner product over a vector space. In our case, this map is specified as in Equation~(\ref{tranamp}), and enables the computation of the transition amplitude between in-words and out-words. Note that the map in Equation~(\ref{tranamp}) is a partial map (with respect to the module of strings) since its definition makes use of causal paths of the given multiway system.


What kind of compositional properties do the $S$-matrices defined by the map in Equation~(\ref{tranamp}) admit? Since edge weights in the multiway graph are free parameters, constraints on these weights determine how successive $S$-matrices in the multiway system compose, and the ensuing associativity of the product. We will now solve these constraints to show that matrix multiplication of the kind
\begin{eqnarray}
    \sum_c \langle w_{out}| w_c\rangle \langle w_c| w_{in}\rangle =  \langle w_{out}| w_{in}\rangle
\label{matmul}
\end{eqnarray}
is satisfied as a construct on the multiway system such that the $S$-matrix obtained from paths traversing several layers of the multiway system is identical to that obtained by matrix multiplication of $S$-matrices associated to intermediate layers.

First, let us consider three successive multiway layers. See Figure~\ref{fig:3_layer_system}. The incoming word is denoted by the index $i$, the outgoing word is denoted by the index $j$, and the word in the middle layer is denoted by $c$. We  compute the $j,i$ component of the $S$-matrix. For this, the index $c$ runs over all words in the intermediate layer. For this three-layer setup, Equation~(\ref{matmul}) yields the following:

\begin{equation}
    \sum_c \omega_{jc}\omega_{ci} e^{i (S_{ci} + S_{jc}) / k} = \sum_c \omega_{jci} e^{i(S_{ci} + S_{jc})/k} 
\label{mat2}
\end{equation}
where we have used the notation $\omega_{ci} = \omega(\gamma_{ci})$, $\omega_{jc} = \omega(\gamma_{jc})$ and $\omega_{jci} = \omega(\gamma_{jci})$. Likewise, the subscripts appearing in terms referring to the action, denote paths over which the action is evaluated. A straightforward solution to Equation~(\ref{mat2}) happens to be

\begin{eqnarray}
 \omega_{jc} \; \omega_{ci} =  \omega_{jci}
\end{eqnarray}

That is, the weight of a composite path is simply the product of multiway edge weights (viewed as irreducible paths). This solution ensures that  
the $S$-matrix obtained from paths traversing several layers of the multiway system is identical to that obtained by matrix multiplication of individual $S$-matrices constructed from each of the intermediate layers. That is, matrix multiplication is now guaranteed as a non-trivial composition obtained exclusively from operations on the multiway system.  
 
Furthermore, when more than one intermediate level is involved, where each level is parameterized by index $c_i$ for $i = 1,\ldots,n$, the above solution generalizes to 
\begin{eqnarray}
\omega(\gamma) = \omega_{j c_n \ldots c_1 i} = \omega_{j c_n}\omega_{c_n c_{n-1}}\cdots \omega_{c_2 c_1} \omega_{c_1 i}    \end{eqnarray}

The above also ensures that the composition of $S$-matrices is associative. 

\begin{figure}
    \centering
    \includegraphics[width=0.60\linewidth]{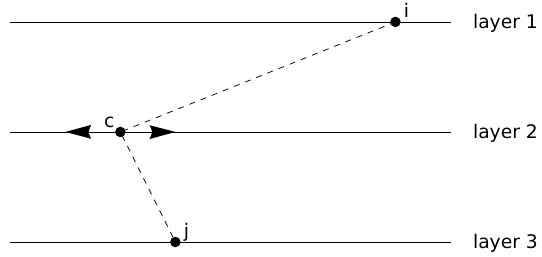}
    \caption{An illustration of a three-layer system. An in-word is denoted with $i$ and an out-word with $j$. The elements of the second layer are denoted with labels $c$, which one sums over.}
    \label{fig:3_layer_system}
\end{figure}

Two remarks are in order here: 
\begin{remark}
Regarding the matrix entries that constitute the $S$-matrix, the action $\mathcal S(\gamma)$ along a path $\gamma$ is fully determined by the words / strings that specify the path $\gamma$. However, the edge weights and, correspondingly, the path weights $\omega(\gamma)$ are free parameters. This yields a family of $S$-matrices (not necessarily unitary). The multiway system in fact encodes multiple $S$-matrices  between any fixed set of in-words and out-words. A specific choice of a $S$-matrix can be extracted only after fixing the edge weights, which is what we shall show in the next section.   
\end{remark}

\begin{remark}
If there are two different multiway systems that share the same causal structure (up to, say, some order of evolution steps), then their respective $S$-matrices, corresponding to the same multiway layers with identical connectivity, will be related in a specific way. This is because $S$-matrices constructed from two successive layers of a multiway system  decompose into a product of the matrix of path weights $\omega_{ji}$ and a diagonal matrix with entries $e^{i\mathcal S_i/k}$, where $j$ runs over all in-words 
\begin{equation*}
 \langle w_{out}| w_{in}\rangle_{ji} =   \omega_{ji} \; (e^{i\mathcal  S_i/k})_{ii}
\end{equation*}
(no Einstein summation implied above). In this case, the form of the matrix of weights $\omega_{ji}$ is the same for both $S$-matrices, meaning that non-zero entries are in the same positions of the weight matrices. Only the diagonal matrix differs with entries of the form $e^{i \mathcal S_i/k}$. Furthermore, successive $S$-matrices compose via matrix multiplication. Hence, respective $S$-matrices,  associated with multiway systems having an identical causal structure, will have a similar structural form. 
\end{remark}

\subsection{Two-Dimensional $S$-Matrices: Non-Interacting}

We begin with the construction of the simplest type of $S$-matrices: 2D and without interactions. For this we consider two in-words and two out-words. In the non-interacting case, an in-word connects to one and only one out-word. For instance, when the connectivity graph is $\{ 1_{in} \to 1_{out}, \; 2_{in} \to 2_{out} \}$ (the case $\{ 1_{in} \to 2_{out}, \; 2_{in} \to 1_{out} \}$ is also non-interacting, as we will see at the end of this subsection), the $S$-matrix takes the form:

\begin{equation}
    U = \begin{pmatrix}
        \omega_{11} e^{i\mathcal S_1/k} & 0\\ 0 & \omega_{22} e^{i\mathcal S_2/k}
    \end{pmatrix}.
\end{equation}

where $\omega$ are real-valued weights. Unitarity requires that  $\omega_{11}^2 = \omega_{22}^2 = 1$. We choose the values as one. Hence the $S$-matrix for non-interacting case becomes:

\begin{equation}
    U = \begin{pmatrix}
        e^{i\mathcal S_1/k} & 0\\ 0 & e^{i\mathcal S_2/k}    \label{eq:S_matrix_ni}
    \end{pmatrix}.
\end{equation}

Expressing this $S$-matrix as a time-evolution operator of a discrete-time system $U = \exp(-i H \Delta t)$ yields the Hamiltonian:

\begin{equation}
    H = \frac{1}{\Delta t} \begin{pmatrix}
        - \mathcal S_1/k & 0\\ 0 & -\mathcal S_2/k
    \end{pmatrix}.
\end{equation}

Notice that in order to make the Hamiltonian positive definite, we need to define the action calculated on the words $1$ and $2$ as the negative of the BSD variety of those words. This motivates the use of variety as a measure of potential energy. In \cite{smolin2021,smolin2022} Smolin also uses  variety as a form of potential energy. 

Up to a global phase factor, the general form of a 2D non-interacting $S$-matrix can be written in one of the following forms:

\begin{equation}
    U = \begin{pmatrix}
        1 & 0\\ 0 & e^{i\alpha}
    \end{pmatrix},\quad \text{or}\quad
    U = \begin{pmatrix}
        0 & 1\\ e^{i\alpha} & 0
    \end{pmatrix}
\end{equation}

for some suitable values of the phase $\alpha$. The second one above is obtained from the first one if we change the order of in-words, that is, if the connectivity data is $\{ 1_{in} \to 2_{out}, \; 2_{in} \to 1_{out} \}$. When $\alpha = \pi/4$ one obtains a realization of the $\pi/8$-gate.

\subsection{Two-Dimensional $S$-Matrices:   Interacting}

Now, we discuss 2D \emph{interacting} $S$-matrices. We add edges to the multiway graph that constitute interactions between words. Such as, between in-1 and out-2 or in-2 and out-1 words. However, there is a subtle detail here. If the interaction is one-sided, for example, if in-1 and out-2 are connected, but in-2 and out-1 are not connected, or vice versa, the off-diagonal elements must vanish. The main reason is the transpose operation in computing the  inverse of unitary matrices. More generally, for an interacting system of $N$ in-words and $N$ out-words, the $i,j$ element of a $S$-matrix, which is also unitary, will be non-zero if and only if the $j,i$ element is non-zero. That is,  the interaction has to be mutual at the level of the multiway system to ensure unitarity: if in-$j$ and out-$i$ are path-connected, then in-$i$ and out-$j$ are also path-connected.

When we consider an interacting 2D $S$-matrix,  every in-word is connected to every out-word. The $S$-matrix then reads as follows:

\begin{equation}
    U = \begin{pmatrix}
        \omega_{11} e^{i\mathcal S_1/k} & \omega_{12} e^{i\mathcal S_2/k}\\ \omega_{21} e^{i\mathcal S_1/k} & \omega_{22} e^{i\mathcal S_2/k}
    \end{pmatrix}.
\end{equation}

Imposing unitarity gives:

\begin{equation}
    U = \begin{pmatrix}
        \sqrt{1-\lambda^2} e^{i\mathcal S_1/k} & \lambda e^{i\mathcal S_2/k}\\ -\lambda e^{i\mathcal S_1/k} & \sqrt{1-\lambda^2} e^{i\mathcal S_2/k}\label{eq:S_i}
    \end{pmatrix}
\end{equation}

for a real parameter $\lambda \in [-1,1]$. Here, $\lambda$ is a free parameter that controls the magnitude of interactions. See Figure~\ref{fig:2by2} for the anatomy of 2D systems.

\begin{figure}
    \centering
    \includegraphics[width=0.5\textwidth]{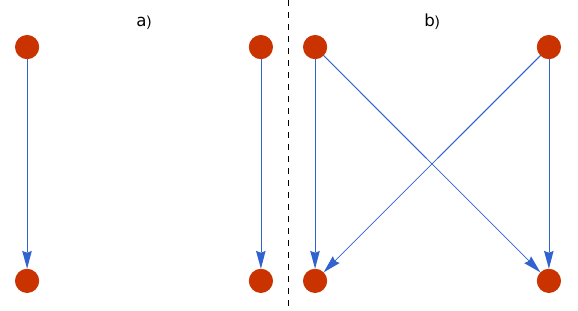}
    \caption{Anatomy of 2D systems. (a) non-interacting case, (b) interacting case. All the nodes are supposed to be Leibnizian strings. Note that there is a symmetry of permutation of words. So the case when in-1 and out-2, and in-2 and out-1 are connected is also a non-interacting set-up.}
    \label{fig:2by2}
\end{figure}

Up to a global phase factor, we can then write $U$ in Equation~(\ref{eq:S_i}) as follows:

\begin{equation}
    U = \begin{pmatrix}
        \cos(\theta) & -\sin(\theta) e^{i\alpha}\\ \sin(\theta) & \cos(\theta) e^{i\alpha}
    \end{pmatrix}
\end{equation}

\subsection{Higher Dimensional $S$-Matrices }

Let us now consider more general forms of $S$-matrices consisting of $n$ in-words and $m$ out-words. We begin with square $S$-matrices. 

In general, due to the structure of the $S$-matrix, for a square unitary matrix $U$ (that is, $n\times n$) we can write the matrix elements as follows:

\begin{equation}
    U_{ij} = \omega_{ij} e^{i \mathcal S_j/k} 
\end{equation}

Each column vector $c_i$ of $U$ consists of elements $\{\omega_{li}\}_l$ multiplied with the constant phase $e^{i \mathcal S_l/k}$. The unitarity requirement of $U$ implies $c_i^\dagger c_j = \delta_{ij}$. Hence, if we write

\begin{equation}
    U = \omega \; \text{diag}(e^{i \mathcal S_1/k},e^{i \mathcal S_2/k},\ldots,e^{i \mathcal S_n/k})    
\end{equation}

it is seen that the $\omega$ matrix is orthogonal. The general $3\times 3$ orthogonal $\omega$ matrix can be expressed as follows \cite{goldstein} (without the factor $\pm 1$):

\begin{equation}
    \omega= \pm\begin{pmatrix}
        \cos \psi \cos \phi - \cos\theta \sin\phi \sin\psi & \cos\psi\sin\phi+\cos\theta\cos\phi\sin\psi & \sin\psi\sin\theta \\
        -\sin\psi\cos\phi-\cos\theta\sin\phi\cos\psi & -\sin\psi\sin\phi + \cos\theta\cos\phi\cos\psi & \cos\psi\sin\theta \\
        \sin\theta\sin\phi & -\sin\theta\cos\phi & \cos\theta 
    \end{pmatrix}
\end{equation}

where we used the Euler angles to denote a generic element of $SO(3)$ and added the factor $\pm 1$, in order to construct elements of $O(3)$. Hence, the generic $S$-matrix we can represent is expressed as follows:

\begin{equation}
    U_{3\times 3} = \omega\,  \text{diag}(e^{i \mathcal S_1/k},e^{i \mathcal S_2/k},e^{i \mathcal S_3/k})
\end{equation}

In particular, the $4\times 4$ $S$-matrices have the following forms:

\begin{eqnarray}  
    U_{4\times 4} &= \begin{pmatrix}
        \omega_{11} e^{i \mathcal S_1/k} & \omega_{12} e^{i \mathcal S_2/k} & \omega_{13} e^{i \mathcal S_3/k} & \omega_{14} e^{i \mathcal S_4/k}\\
        \omega_{21} e^{i \mathcal S_1/k} & \omega_{22} e^{i \mathcal S_2/k} & \omega_{23} e^{i \mathcal S_3/k} & \omega_{24} e^{i \mathcal S_4/k}\\
        \omega_{31} e^{i \mathcal S_1/k} & \omega_{32} e^{i \mathcal S_2/k} & \omega_{33} e^{i \mathcal S_3/k} & \omega_{34} e^{i \mathcal S_4/k}\\
        \omega_{41} e^{i \mathcal S_1/k} & \omega_{42} e^{i \mathcal S_2/k} & \omega_{43} e^{i \mathcal S_3/k} & \omega_{44} e^{i \mathcal S_4/k}
    \end{pmatrix}
\end{eqnarray}

Here $\omega \in O(4)$ . A few definitions will be useful in what follows.

\begin{definition}[Fully interacting system]
    A fully interacting system (as a subsystem of the multiway system) is a complete bipartite graph (as a subgraph of the multiway graph). That is, all of the $n$ in-words in the designated $S$-matrix construction are connected to all of the $m$ out-words.
\end{definition}

In the presence of $n$ in-words and $m$ out-words, the dimensions of the $S$-matrix is $m\times n$. In general, the $S$-matrix will be non-square. Hence, the definition of unitarity requires more care. Note that when we define the $S$-matrix, the columns are multiplied by a pure phase function of the form $\exp(i \mathcal S_i/k)$, where $\mathcal S_i$ is proportional to the BSD variety of the $i^{th}$ in-word. When we neglect these pure phases, the remaining matrix that consists of the $\omega$'s will be orthogonal since these parameters are real valued. Hence, it is sufficient to consider orthogonal matrices and to include pure phases in the columns of these matrices.

Generalization of unitarity or orthogonality to non-square matrices is known as semi-unitary or semi-orthogonal matrices, respectively.

\begin{definition}[Semi-unitary matrix]
    If $U$ is a $m \times n$ matrix that satisfies $U^\dagger U = I_n$, we say that $U$ is a semi-unitary matrix.
\end{definition}

In the literature, the above definition can either be $U^\dagger U = I_n$ or $U U^\dagger = I_m$. However, because we consider $S$ as acting on the initial states, we have chosen the specific form of the definition. 

\begin{definition}[Semi-orthogonal matrix]
    If $U$ is a $m\times n$ matrix that satisfies $U^T U = I_n$, we say that $U$ is a semi-orthogonal matrix.
\end{definition}

A general form of the $S$-matrix (possibly rectangular) will have entries as follows. Consider a two layer system, suppose there are $n$ in-words and $m$ out-words. If the $i^{th}$ in-word is connected to $j^{th}$ out-word, then the $j, i$ element of the $S$-matrix is of the form $\omega_{ji} e^{i \mathcal S_i/k}$, where $\omega_{ji}$ is a free $\mathbb R$-valued parameter. If these words are not connected, then the $j, i$ element of the $S$-matrix is zero. The analogous pattern holds for constructions using more than two layers of the multiway system (in that case one has to consider path weights and the action evaluated along a path).

As an example, let us consider a two by three system. This system has two in-words and three out-words. The $S$-matrix will be of the following form:

\begin{equation}
    U = \begin{pmatrix}
        \omega_{11} e^{i \mathcal S_1/k} & \omega_{12} e^{i \mathcal S_2/k}\\
        \omega_{21} e^{i \mathcal S_1/k} & \omega_{22} e^{i \mathcal S_2/k}\\
        \omega_{31} e^{i \mathcal S_1/k} & \omega_{32} e^{i \mathcal S_2/k}
    \end{pmatrix}.
\end{equation}

Equation $U^\dagger U = I_2$ can be solved since there are enough parameters. Hence, in this case we can find a semi-unitary $S$-matrix. We now consider a three by two system. This system consists of three in-words and two out-words. The $S$-matrix is as follows:

\begin{equation}
    U = \begin{pmatrix}
        \omega_{11} e^{i \mathcal S_1/k} & \omega_{12} e^{i \mathcal S_2/k} & \omega_{13} e^{i \mathcal S_3/k}\\
        \omega_{21} e^{i \mathcal S_1/k} & \omega_{22} e^{i \mathcal S_2/k} & \omega_{23} e^{i \mathcal S_3/k}
    \end{pmatrix}.
\end{equation}

If this matrix is to be semi-unitary, then the matrix equation $U^\dagger U = I_3$ should be satisfied, where $I_3$ is the $3 \times 3$ identity matrix. However, there are no solutions for this equation. That is, unitarity does not hold in this case, no matter what choice of parameter values are taken.

\subsection{A Comment on Restoring Unitarity }

Typically,  unitarity is associated to a system being closed. Whereas, when unitarity breaks down, it is suggestive of an open system. As we shall see, if there are unconnected words, we cannot have a unitary $S$-matrix. Also, in cases where the number of words decreases in the next layer, we might not have a unitary $S$-matrix. On the other hand, if the number of words in a layer stays the same or increases, we can always have (semi)-unitary $S$-matrices. 

Let us now discuss how the unitarity of $S$-matrices can be restored by extending the system. As an example, consider the multiway system represented in Figure~\ref{fig:non_interacting_4_2} that relates two qubits to a single qubit. That means we have four in-words and two out-words.

The quantum gate that arises from the multiway system in Figure~\ref{fig:non_interacting_4_2} can map $\ket{00}\to \ket{0}$ and $\ket{11}\to e^{i\alpha}\ket{1}$, and $\ket{01}\to 0, \ket{10}\to 0$. Hence, the $S$-matrix cannot be unitary for such a multiway system. That process describes a \emph{non}-interacting system. If we would like to associate an extension to this system, it can be thought of as consisting of two in-words that do not interact with the rest of the system in any way. In cases where there is at least one unconnected in-word or out-word, the system on its own does not yield a unitary $S$-matrix. That would mean either at least one zero column or row.

\begin{figure}
    \centering
    \includegraphics[width=0.6\linewidth]{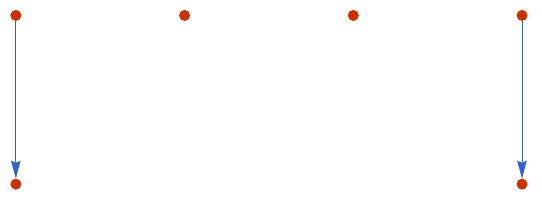}
    \caption{A non-interacting system with four in-words and two out-words.}
    \label{fig:non_interacting_4_2}
\end{figure}

In Figure~\ref{fig:222} we see an example of a situation where a rewriting rule does not match at least one word in a multiway system. When we consider the two consecutive layers, we can write $U = U_2 U_1$ where $U_1$ is unitary. However, $U_2$ cannot be a unitary matrix, since the second column is identically zero. The zero column of $U_2$ removes all the contribution from the second word of the second layer to the composite $S$-matrix $U$ due to the structure of matrix multiplication. 

\begin{figure}[h]
    \centering
    \includegraphics[width=0.35\linewidth]{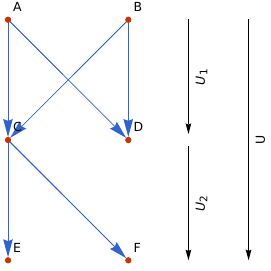}
    \caption{An example for a multiway system where a re-write rule does not match at least one word. Every word in the illustration is supposed to be Leibnizian.}
    \label{fig:222}
\end{figure}

For such a case, we can consider a transition matrix from the first layer to the third layer. Because there is only one path that connects a given in-word to any out-word, the $S$-matrix will be as follows:

\begin{equation}
    U = \begin{pmatrix}
        \Omega_{11} e^{i (\mathcal S_A+\mathcal S_C)/k} & \Omega_{12} e^{i (\mathcal S_B+\mathcal S_C)/k}\\
        \Omega_{21} e^{i (\mathcal S_A+\mathcal S_C)/k} & \Omega_{22} e^{i (\mathcal S_B+\mathcal S_C)/k}
    \end{pmatrix}\label{eq:S_matrix_nonadjacent}
\end{equation}

where we used $\Omega_{ij}$ instead of $\omega_{ij}$ in order to indicate that we have an $S$-matrix that relates non-adjacent layers. In terms of complex phases, this form is equivalent to a $2\times 2$ interacting system. Hence, even though there is a word in the second layer where a rewriting does not match, one can find a unitary transition matrix that connects the first and third layers.

On the other hand, there can be no \emph{completely zero rows} in the $S$-matrix which is because when we generate the multiway system an out-word should have been found out via application of rewriting rule that is applied to at least one of the in-words. From now on we suppose there are no completely zero \emph{columns} of $S$-matrix in addition to existence of no zero rows. Suppose now that there are $n$ in-words and $m$ out-words and there are no unconnected words considering these adjacent layers in the multiway system. The shape of $S$-matrix is $m\times n$. Hence, the matrix equation $U^\dagger U = I_n$ consists of $n^2$ individual equations. However, the number of independent equations is $n(n+1)/2$. On the other hand, the number of free parameters in $U$ is $mn$. Hence, the matrix equation can be solved if $mn \geq n(n+1)/2 \implies 2m \geq n+1 \implies m\geq (n+1)/2$. So, if the number of in-words ($m$) satisfies this inequality we can find $U$ such that $U^\dagger U= I_n$. The number of free parameters of $U$ \emph{after} the condition of (semi)-unitarity is found as follows:

\begin{equation}
    mn - \frac{n(n+1)}{2} = \frac n 2 (2m-n-1).
\end{equation}

For the case of a square $S$-matrix where $m = n$ the number of free parameters is $n(n-1)/2$. Since $n\geq 1$ the number of free parameters is a non-negative integer. However, if there are quite less out-words than in-words, such that the inequality is \emph{not} satisfied, in other words, if the multiway system \emph{quickly} decreases in number of words in the next layer, equation $U^\dagger U = I_n$ cannot be solved due to lack of free parameters. These free parameters are path weights ($\omega$) that we have discussed above.

The most economic way to solve the problem is to introduce $\Delta m$ auxiliary words on the second layer such that $2 (m+\Delta m)\geq n + 1$. The minimum number of $\Delta m$ hence satisfies $\Delta m \geq (n+1)/2 - m$. If $n$ is even $\Delta m_{\text{even}} = n/2 +1-m$ and if $n$ is odd $\Delta m_{\text{odd}} = (n+1)/2 - m$.

If $U$ denotes the $S$-matrix that relates in-words and out-words and $U_e$ is a suitable matrix that has the shape $\Delta m \times n$ we can write the \emph{composite} $U_c$ matrix as follows:

\begin{equation}
    U_c \equiv \begin{pmatrix}
        U\\ U_e
    \end{pmatrix}\label{eq:S_composite}
\end{equation}

where $U_e$ is considered as a part of the composite $U_c$ matrix that provides a means to capture information loss. The matrix $U_c$ has enough parameters that it can be put in to a (semi)-unitary matrix form. When we write $U_c^\dagger U_c = I_n$ we see that the parameters of $U$ and $U_e$ are not independent of each other. $U$ has $mn$ free parameters and $U_e$ has $\Delta m\, n$ free parameters, and in total there are enough free parameters in $U_c$, which guaranties that we can solve $U_c^\dagger U_c = I_n$. If we consider $U_e$ as a function of $U$, that means that the parameters of $U_e$ can be solved. One can also do the reverse and try to solve $U$ as a function of $U_e$. Hence, the unitarity of $U_c$ imposes restrictions on the connection structure of the auxiliary words in the multiway system.

In the next layer, we should consider $\Delta m$ auxiliary words. In other words, the rewriting rule should not match these auxiliary words. That means that some columns of the $S$-matrix for the next layer are zero, and the existence of auxiliary words has no effect on the $S$-matrix when we consider one more layer and matrix product of $S$-matrices. This is expected because in Definition~\ref{def:s_matrix} we consider Leibnizian paths that connect in and out-words. Hence, words in intermediate layers where the rewriting rule does not match are not included in the calculation of the matrix.

In general, the multiway system will produce many $S$-matrices that are not unitary. This may be useful for considering operators related to open systems, including non-unitary ZX diagrams. On the other hand, when it is necessary to work with unitary transition matrices, then the prescription for restoring unitarity described above can be implemented: namely, one  considers composite $U_c$ matrices as in Equation~(\ref{eq:S_composite}) and specifies the extension of the system at each step where $2m < n+1$ applies.


\section{Applications}
\label{sec:applications}

In this Section we consider various applications of our system to quantum gates. At the end of the Section we comment on quantum circuits for qudits as a further application area.

\subsection{Single Qubit: Hadamard and $\pi / 8$ Gates}  
\label{subsec:single_qubit}

In this subsection, we consider a fully interacting $2\times 2$ system and in particular the one depicted in Figure~\ref{fig:fi_22}.

\begin{figure}
    \centering
    \includegraphics[width=.8 \linewidth]{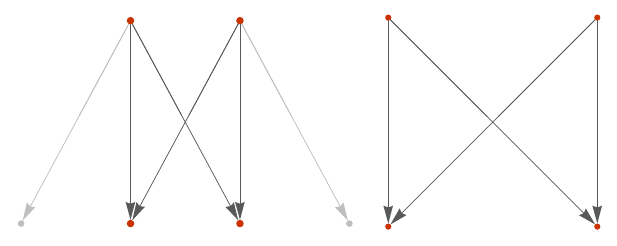}
    \caption{A single step evolution of the multiway system. The in-words are ``AABBDCABABDC'' and ``ABABCDABABDC'' and the rewriting rule is BA$\to$AB, DC$\to$CD. On the left is the full multiway system and on the right we have its physical subset. The BSD variety of in-words are 10 and those of out-words are 12.}
    \label{fig:fi_22}
\end{figure}

The $S$-matrix of this system is proportional to an orthogonal matrix, because we can factor out the phase terms which appear the same in every column of the $S$-matrix. Up to permutation of in and out-words the $S$-matrix is found in the following form

\begin{equation}
    \begin{pmatrix}
        \cos(\theta) & -\sin(\theta)\\ \sin(\theta) & \cos(\theta)
    \end{pmatrix}.
\end{equation}

If we choose $\theta = -\pi/4$ we obtain:

\begin{equation}
    U = \frac{1}{\sqrt{2}}\begin{pmatrix}
        1 & 1\\ -1 & 1
    \end{pmatrix}.
\end{equation}

We see that $U = H \sigma_x$. Due to permutations of in and out words, without loss of generality, we can write $U \sim H$, which means that we can represent the Hadamard gate. If we consider a non-interacting system of two in-words and two out-words, we can write down the form of the $S$-matrix as follows:

\begin{equation}
    U = \begin{pmatrix}
        \omega_{11} e^{i\mathcal S_1/k} & 0\\
        0 & \omega_{22} e^{i\mathcal S_2/k}
    \end{pmatrix}.
\end{equation}

If we choose $\omega_{11} = \omega_{22} = 1$ up to a global factor, we can write $U$ as follows:

\begin{equation}
    U = \begin{pmatrix}
        1 & 0\\
        0 & e^{i \Delta \mathcal S/k}
    \end{pmatrix}
\end{equation}

where $\Delta S \equiv S_2 - S_1$. If $\Delta S = \pi/4$ then our system can represent the $\pi/8$ gate.

\subsection{Two Qubits}\label{subsec:two_qubits_spin}

In this part, we will show that one can encode the CNOT and the SWAP gates using multiway systems. These two gates correspond to non-interacting systems and can be represented by permuting the order of appropriate words in a specific layer.

\subsubsection{The CNOT Gate}

In this subsection, we consider the two-layer multiway system illustrated in Figure~\ref{fig:mw_cnot}. Hence, we can write the general form of the $S$-matrix as follows:

\begin{figure}
    \centering
    \includegraphics[width=0.5\linewidth]{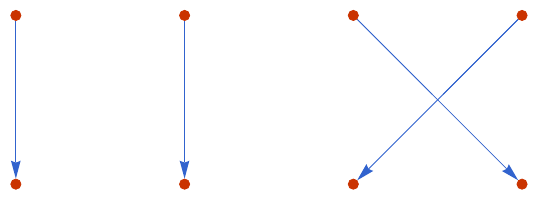}
    \caption{A two-layer multiway system. This is a non-interacting system that represents the CNOT gate. We suppose that the variety value of each in-word is the same.}
    \label{fig:mw_cnot}
\end{figure}

\begin{equation}
    U = \begin{pmatrix}
        \omega_{11} e^{i \mathcal S_1/k} & 0 & 0 & 0\\
        0 & \omega_{22} e^{i \mathcal S_2/k} & 0 & 0\\
        0 & 0 & 0 & \omega_{34} e^{i \mathcal S_4/k}\\
        0 & 0 & \omega_{43} e^{i \mathcal S_3/k} & 0
    \end{pmatrix} = e^{i \mathcal S/k}
    \begin{pmatrix}
        \omega_{11} & 0 & 0 & 0\\
        0 & \omega_{22} & 0 & 0\\
        0 & 0 & 0 & \omega_{34} \\
        0 & 0 & \omega_{43}  & 0
    \end{pmatrix}
\end{equation}

where we supposed that the variety value of each in-word is the same: $S$. One of the solutions for this form of the $S$-matrix is known as the CNOT gate:

\begin{equation}
    U = e^{i \mathcal S/k}
    \begin{pmatrix}
        1 & 0 & 0 & 0\\
        0 & 1 & 0 & 0\\
        0 & 0 & 0 & 1 \\
        0 & 0 & 1  & 0
    \end{pmatrix}
\end{equation}

Here, the global phase $e^{i\mathcal S/k}$ has no physical role, and the multiway system can encode the CNOT gate.

\subsubsection{The SWAP Gate}

In this subsection, we consider the system depicted in Figure~\ref{fig:qc3}. The BSD variety of strings in the physical multiway system is the same, namely 8. Two layers in the physical multiway system have the same structure. For that purpose, we consider the first layer. Its $S$-matrix is of the following form where: 

\begin{figure}[h]
    \centering
    \includegraphics[width=\linewidth]{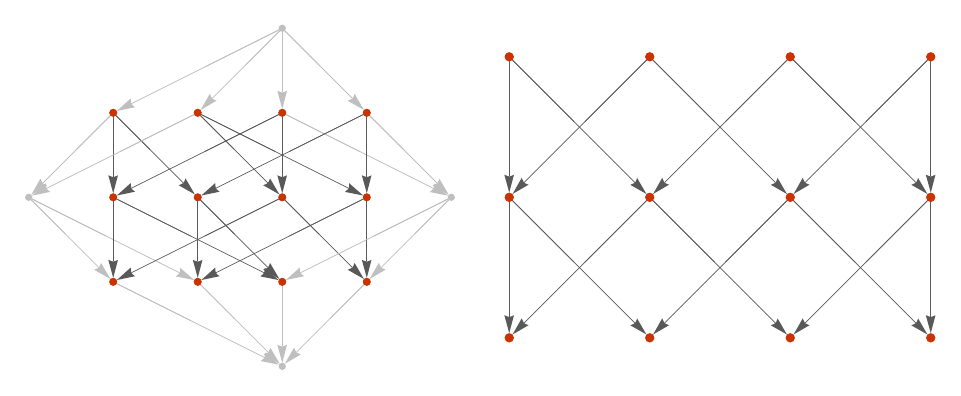}
    \caption{The full multiway system on the left hand side and its physical subset on the right hand side. The head node is ``BADCBADC'' and the rewriting rule is $\{BA\to AB, DC\to CD\}$.}
    \label{fig:qc3}
\end{figure}

\begin{equation}
    U = e^{-i V/k}\begin{pmatrix}
        \omega_{11} & \omega_{12} & 0 & 0\\
        \omega_{21} & 0 & \omega_{23} & 0\\
        0 & \omega_{32} & 0 & \omega_{34}\\
        0 & 0 & \omega_{43} & \omega_{44}
    \end{pmatrix},\label{eq:S_matrix_magnetic}
\end{equation}

Because all of the in-words have the same BSD variety, we could factor out the exponential phase factor. One of the solutions for such a matrix with the unitarity condition is the following matrix:

\begin{equation}
    U = e^{-i V/k}\begin{pmatrix}
        1 & 0 & 0 & 0\\
        0 & 0 & 1 & 0\\
        0 & 1 & 0 & 0\\
        0 & 0 & 0 & 1
    \end{pmatrix}.
\end{equation}

Apart from the global phase factor, this matrix is known as the SWAP gate. It can be represented using three CNOT gates. The minimal multiway system that encodes the SWAP gate is just a non-interacting system similar to the case depicted in Figure~\ref{fig:mw_cnot} where instead of the third and fourth words, the order of the second and third words are interchanged among themselves. See Figure~\ref{fig:mw_swap}.

\begin{figure}
    \centering
    \includegraphics[width=0.5\linewidth]{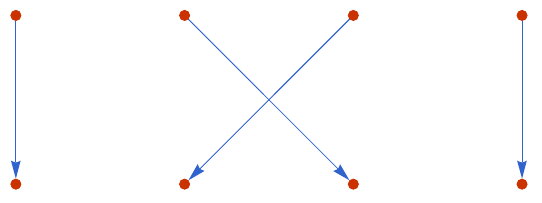}
    \caption{A minimal two-layer multiway system that can encode the SWAP gate.}
    \label{fig:mw_swap}
\end{figure}

\subsection{Qudits}

In general, our systems can be thought of as producing quantum circuits for qudits. A qudit is a generalization of two-level qubits to higher dimensions, which we define below.

\begin{definition}[Qudit]
    A qudit is a $d$-level quantum system. The basis states can be labeled as $\ket{0},\ket{1},\ldots,\ket{d-1}$. For $d=2$ one has the usual qubits.
\end{definition}

However, not all possible types of circuits are allowed. The array of character strings that correspond to components of qudits should be Leibnizian. On the other hand, a qudit can be thought of as comprising of smaller qudits. If $d = p_1^{n_1} \ldots p_a^{n_a}$ is the expression of positive integer $d$ in terms of its prime divisors $p_1,\ldots,p_a$, then we can consider this large qudits as a composite structure of $n_1$ $p_1$-qudits, $n_2$ $p_2$-qudits, and so on. An obvious example is when $d = 2^n$ for some $n$. In this case, the qudit is composed of $n$ qubits. We used this result in the example in subsection~\ref{subsec:two_qubits_spin} where we considered a two-qubit system. Another trivial example is $d = 6$. This qudit is composed of a qubit and a qutrit.

\begin{figure}
    \centering
    \includegraphics[width=0.45\linewidth]{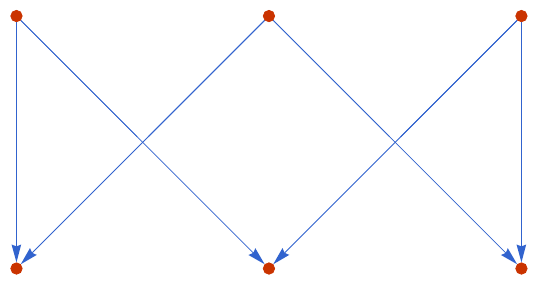}
    \caption{A portion of a (hypothetical) multiway system that corresponds to a qutrit. We suppose that the variety of each in-word is the same: $V$.}
    \label{fig:S_3_3}
\end{figure}

Let us consider a system of single \emph{qutrit}, which is a 3-level system. The configuration is depicted in Figure~\ref{fig:S_3_3}. The most general form of $S$-matrix is as follows:

\begin{equation}
    U = e^{-i V/k}\begin{pmatrix}
        \omega_{11} & \omega_{12} & 0\\
        \omega_{21} & 0 & \omega_{23}\\
        0 & \omega_{32} & \omega_{33}
    \end{pmatrix},
\end{equation}

We solved for $U^\dagger U = I_3$ and obtained (eight) solutions up to a permutation of words as follows:

\begin{equation}
    U = e^{-i V/k}\begin{pmatrix}
        \pm 1 & 0 & 0\\
        0 & 0 & \pm 1\\
        0 & \pm 1 & 0
    \end{pmatrix},\label{eq:S_matrix_form_3_3}
\end{equation}

Considering all the matrix entries as 1, this unitary corresponds to $\ket 0\to\ket 0, \ket 1\to \ket 2, \ket 2\to\ket 1$ up to a global phase. In a way, it can be seen as a SWAP gate \cite{qutrit_swap}.

In a multiway system that expands as the depth increases the dimension of the vector space of words at a certain time, the $d$ parameter of the qudit, increases. In the $d\to\infty$ limit, in this sense, we expect that our system can represent rather complex quantum systems. It would be an interesting future research direction to investigate how the dynamics of some simple quantum systems like the harmonic oscillator can be expressed within our system.

\subsection{Multiway System to Quantum Circuits}

In the above discussions concerning the form of the $S$-matrix, we have considered, most of the time, what is known as the \emph{computational basis}. These are a set of orthonormal vectors. For example $\ket 0, \ket 1$ for a qubit, $\ket{00},\ket{01},\ket{10},\ket{11}$ for a system of two qubits, and $\ket 0, \ket 1, \ket 2$ for a qutrit. In the remaining subsections of this Section, when we write the term `multiway system' we actually refer to \emph{Leibnizian} part of the multiway system where each vertex is Leibnizian.

The case when the number of elements at a certain depth in the multiway system is a power of two this system corresponds to a relation between some number of qubits. Once the (family of) $S$-matrices is determined, one can produce a (family of) quantum circuits.

If the system of in and out-words constitute a graph that is composed of disjoint subgraphs as $g = \cup_{i=1}^n g_i$ and $g_i \cap g_j = 0$ for $i\neq j$ then the $S$-matrix should be a tensor product of each $S_i$-matrices in the form of $U = \bigotimes_{i=1}^n U_i$. In ZX-diagrams these will be represented by parallel lines on which there are local operations. Having put forward the tensor product structure, we can now consider in and out-words that do not have disjoint subgraphs.

We have discussed above the form of the $S$-matrix that we can obtain in our formalism. If it is a square matrix, the $i,j^{\text{th}}$ and $j,i^{\text{th}}$ components of the $S$-matrix can be non-zero if the interaction between the words is mutual. If the number of words increases in the next step, we could still find a semi-unitary $S$-matrix, and if the number of words decreases in the next layer, we can consider a possible extension of the system to restore unitarity.

If the $S$-matrices at each level have the same dimensions (\emph{e.g.} $n\times n$) we can easily associate a quantum circuit for a given multiway system using qudits. The subtlety arises when the number of words in the next layer differs. If the number of words increases,  we can consider this situation as embedding a state vector in a larger space. If the number of words decreases, we may include an environment as a (possibly composite) qudit.

\begin{figure}
    \centering
    \includegraphics[width=\linewidth]{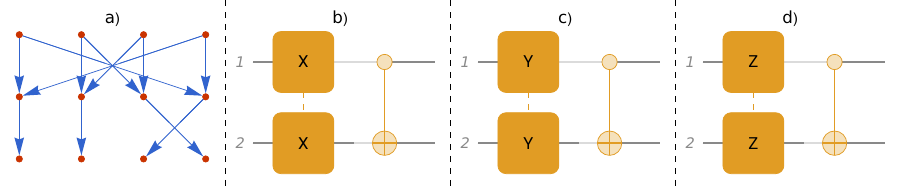}
    \caption{A three-layer multiway system (where words are ordered in the computational basis) and a few example quantum circuits it can correspond to alongside others. The time for quantum circuits flows from left to right.}
    \label{fig:mw_qc}
\end{figure}

Let us now think of a concrete example. Considering the multiway system appearing in Figure~\ref{fig:mw_qc} the forms of $S$-matrices for the first two and last two layers are as follows (we suppose that the words in each layer have the same variety value):

\begin{equation}
    U_{12} = e^{-i V_1 / k}\begin{pmatrix}
        \omega_{11} & 0 & 0 & \omega_{14}\\
        0 & \omega_{22} & \omega_{23} & 0\\
        0 & \omega_{32} & \omega_{33} & 0\\
        \omega_{41} & 0 & 0 & \omega_{44}
    \end{pmatrix},\quad 
    U_{23} = e^{-i V_2 / k}\begin{pmatrix}
        \omega_{11} & 0 & 0 & 0\\
        0 & \omega_{22} & 0 & 0\\
        0 & 0 & 0 & \omega_{34}\\
        0 & 0 & \omega_{34} & 0
    \end{pmatrix}
\end{equation}

where the parameters of $U_{12}$ and $U_{23}$ are independent from each other. According to the forms of given $S$-matrices, $U_1$ may correspond to $\sigma_n \otimes \sigma_n$ for $n = x,y,z$ or SWAP gate. $U_2$ may correspond to a CNOT gate.

There is a subtle issue about the interpretation of gates. Suppose that we have a $n$ qubit system with $S$-matrix dimension $2^n \times 2^n$. It is enough to consider the $n=2$ case to illustrate the main point. In this case, the $S$-matrix has the dimension $4\times 4$. This system can be considered to be consisting of 2 qubits or a single 4-qudit. In terms of matrix properties, there is no difference between the two. However, the \emph{interpretation} of gates differs in both cases. As a simple example, consider the $U_{23}$ matrix just above. If we consider the system as consisting of 2 qubits, this gate may correspond to a CNOT gate, where a controlled NOT operation is performed on the second qubit, if the first one is in the state $\ket 1$. If we interpret this gate from a 4-qudit perspective, we have a kind of SWAP gate that maps $\ket 2 \leftrightarrow \ket 3$. In essence we can regard the system of $n$ qubits as an $2^n$-qudit. However, it may be more illustrative to think of the system as consisting of $n$ qubits from the perspective of quantum computation.

\subsection{Generation of ZX Spider Operators}

Let us discuss the generation of $Z$-spiders and $X$-spiders of ZX-calculus. We consider $L_i$ input legs and $L_o$ output legs for qubits. For some parameter $\alpha$ the $Z$-spider is the following linear map: $\ket{0\cdots 0}\bra{0\cdots 0} + e^{i\alpha} \ket{1\cdots 1}\bra{1\cdots 1}$. In terms of multiway systems, this map corresponds to a non-interacting system of $2^{L_i}$ in-words and $2^{L_o}$ out-words where only two of in-words and two of out-words are connected.

As for $X$-spiders, they correspond to the following map for some parameter $\alpha$: $\ket{+\cdots +}\bra{+\cdots +} + e^{i\alpha} \ket{-\cdots -}\bra{-\cdots -}$. Here, $\ket \pm$ corresponds to eigenvectors of eigenvalue $\pm 1$ of the $\sigma_x$ operator. In particular $\ket{\pm} = (\ket 0 \pm \ket 1)/\sqrt 2$. We have seen that we can represent the Hadamard gate. If we apply Hadamard gate to each leg of a $Z$-spider, we can obtain the $X$-spider. In essence, an $X$-spider can be represented by using at least four layers in a multiway system.

\begin{figure}
    \centering
    \includegraphics[width=\linewidth]{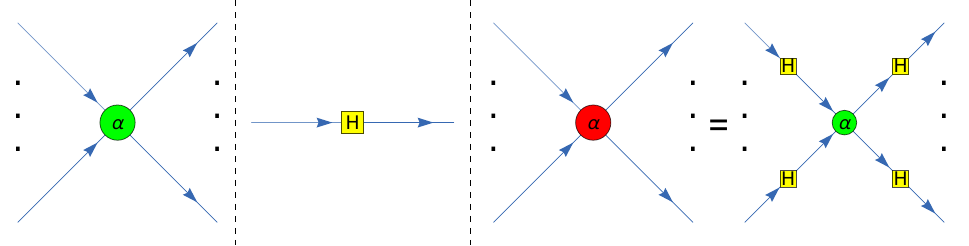}
    \caption{$ZX$-diagrams. (Left) $Z$-spider, (Middle) The Hadamard gate. (Right) $X$-spider.  }
    \label{fig:zx_diagrams}
\end{figure}

As a concrete example, let us consider the Hadamard gate, and $Z$-spider and $X$-spider for two input and two output legs. Their matrix representations are as follows:

\begin{eqnarray}
        H &=& \frac{1}{\sqrt 2}\begin{pmatrix}
        1 & 1\\
        1 & -1
    \end{pmatrix} 
\end{eqnarray}

\begin{eqnarray}    
    Z(\alpha) &=& \begin{pmatrix}
        1 & 0 & 0 & 0\\
        0 & 0 & 0 & 0\\
        0 & 0 & 0 & 0\\
        0 & 0 & 0 & e^{i\alpha}
    \end{pmatrix} 
\end{eqnarray}

\begin{eqnarray}  
    X(\alpha) &=& \frac 14 \begin{pmatrix}
        1+e^{i\alpha} & 1-e^{i\alpha} & 1+e^{i\alpha} & 1-e^{i\alpha}  \\
        1-e^{i\alpha} & 1+e^{i\alpha} & 1-e^{i\alpha} & 1+e^{i\alpha}\\
        1+e^{i\alpha} & 1-e^{i\alpha} & 1+e^{i\alpha} & 1-e^{i\alpha}\\
        1-e^{i\alpha} & 1+e^{i\alpha} & 1-e^{i\alpha} & 1+e^{i\alpha}
    \end{pmatrix}
\end{eqnarray}

For the corresponding minimal multiway systems see Figure~\ref{fig:zx_mw}.

\begin{figure}
    \centering
    \includegraphics[width=1.0\linewidth]{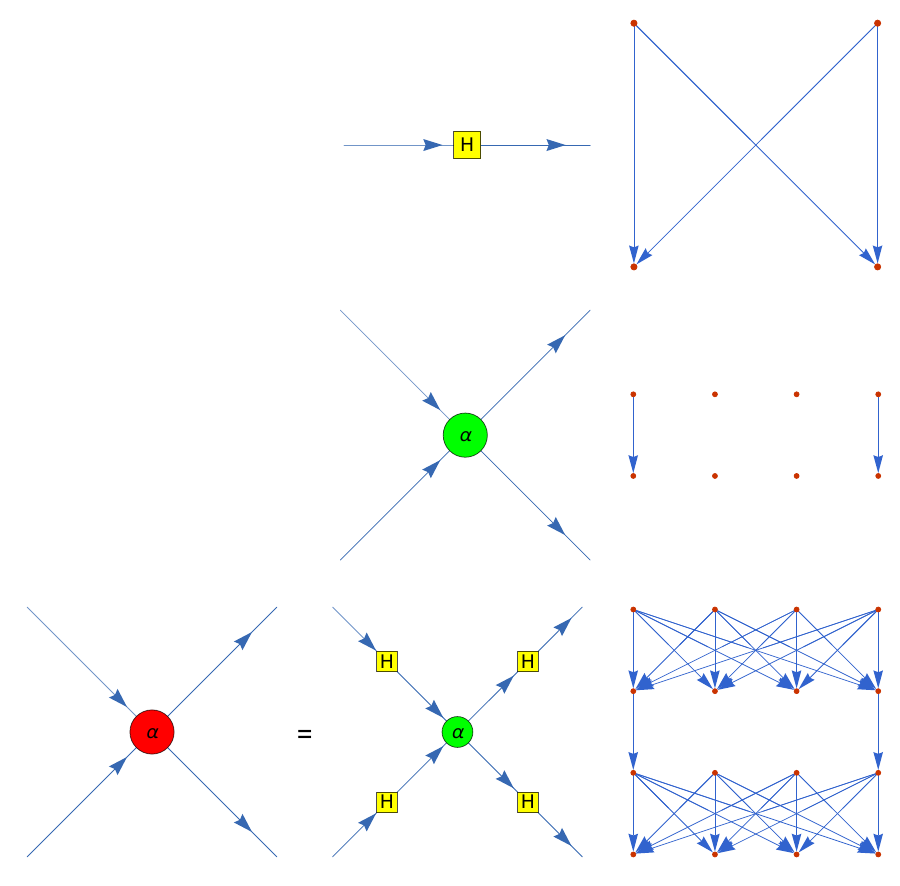}
    \caption{Minimal multiway system representations of the elements of $ZX$-calculus: the Hadamard gate, the $Z$-spider and the $X$-spider.}
    \label{fig:zx_mw}
\end{figure}

We can also obtain the CNOT or SWAP gates by permuting words in a non-interacting system. The discussion here indicates that one can find a (family of) ZX-diagrams starting from multiway systems.

\subsection{A Comment on Universal Quantum Computation}

So far we have shown that there is a correspondence between multiway systems and quantum circuits. The map from a multiway system to quantum circuits is one-to-many. This is because of the free parameters in the $S$-matrix one needs to determine. 

We have seen that we can represent the Hadamard gate and the $\pi/8$-gate for single qubits, as well as the CNOT gate for two qubits. What this means is that we can represent any given quantum circuit by suitably choosing a multiway system. On the other hand, since using a multiway system we can approximate any quantum circuit, we may say that the encoding of finite dimensional quantum systems by non-deterministic rewriting systems is universal. This means that using multiway systems one can represent a given quantum circuit. If the representation is not exact, one can consider a series of multiway systems, e.g. $\{\mathcal{M}_i | i\in I\}$ for some index set $I$ which may be $\mathbb N$, where in the limit (that is, $i\to\infty$) the representation approximates the given quantum circuit.

For a given multiway system, we can determine the words and their BSD variety values. Hence, using this information we can produce a \emph{dual} quantum circuit. In this case, we have the freedom to shuffle words at any layer as this is a symmetry of the multiway system and there is no a priori way to order words in each layer. This freedom in turn allows us to represent more quantum circuits for a given multiway system. As a simple example, the CNOT gate and the identity gate share the same connection structure as two layer multiway systems.

All in all, there exists a one-to-many map from multiway systems to quantum gates. For the case where the number of words in each layer is the same (\emph{e.g.} $2^n$ for some $n \in \mathbb Z^+$),  multiway systems yield representations of $H$, CNOT and $\pi/8$ gates. The compositions of these three gates realize universal quantum computation.

\section{Conclusions and Discussion } 
\label{sec:conclusion}

In this work, we have proposed a computational framework based on nondeterministic rewriting systems and shown that this encodes representations of finite-dimensional quantum operators, particularly those corresponding to quantum gates. The existence of such a map from abstract symbolic systems to quantum operators demonstrates the rich symbolic representational capabilities of abstract rewriting systems. This provides a background-independent framework for investigating quantum processes based on the nondeterministic causal structure of abstract computational systems. Rather than assuming an a priori Hilbert space of quantum operators, our work here takes a constructivist approach to building quantum operators. The  building blocks of our formalism are themselves not quantum states or operators, but instead are syntactic entities of an abstract computational system, without any quantum underpinnings. Nevertheless, these abstract symbolic systems yield representations of specific quantum gates and their compositions. 

More specifically, our formalism makes use of a rule-based string substitution system, involving a particular type of cyclic character strings called Leibnizian strings. We have shown that these strings exhibit a Fermi-Dirac distribution for expectation values of occupation numbers of character neighborhoods within the string. This is a combinatorial abstraction of an $N$-fermion system. Mathematically, this collection of character strings realizes  a $\mathbb{Z}$-module with a symmetric $\mathbb{Z}$-bilinear form. What we have constructed here is simply the discrete counterpart of a vector space with an inner product. This is the precise mathematical apparatus that admits the existence of a discrete analogue of a path integral based on an action defined over a sequence of Leibnizian strings, and the subsequent construction of a $S$-matrix for multiway rewriting systems.

In a sense, one may think of this as \emph{the statistical mechanics of computation}. By suitably defining a sum over paths in a discrete space of states, governed by a causal order, we have computed $S$-matrices that capture transition amplitudes between in-states and out-states. We have also shown that successive compositions of $S$-matrices, as multiway constructions, formally satisfy the rules of matrix multiplication. Furthermore, we find solutions to the path weights that ensure the unitarity of the $S$-matrices. It is these unitary $S$-matrices that yield representations of quantum operators. It is worth noting that the $S$-matrix generally identifies a family of quantum operators rather than a single specific one. We proceed to compute explicit examples of quantum gates involving qubits, qudits, and more general circuits. We find that, as  formal models of nondeterministic computation, rewriting systems of Leibnizian strings with causal structure encode representations of the CNOT, $\pi/8$, and Hadamard gates.

This brings us to an important question: how is it that an abstract rewriting system, which typically models classical computation, can have the capability to encode representations of quantum operators?    

First of all, let us note that although the multiway rewriting system is not itself quantum mechanical, its causal structure is nondeterministic. This is due to the partial order (poset) structure. What we have essentially done here is exploit this partial order to perform a statistical mechanics path-sum over sequences of rewriting events. This is a discrete analogue of the standard path integral over trajectories in phase space. The state transition amplitudes we compute are weighted over an ensemble of paths formed by a sequence of Leibnizian strings. Furthermore, unitarity is imposed by fixing the edge weights of the multiway system. Unitary ensures the necessary probability constraints over state transitions. This is how a complete representation of finite-dimensional quantum operators can be obtained from the structure of a poset with decorations (referring to node and edge data). 

Of course, a different choice of edge weights will not lead to unitary $S$-matrices. Solutions can be found that encode other types of matrices over the multiway path-sum. The multiway system is a  symbolic causal network, which akin to a neural network, can encode multiple representations based on its parameter choices. Our work here reveals that statistical mechanics on posets offers the possibility of a very rich encoding space. 

Interestingly, causal structure based on posets have prominently been used in approaches to quantum gravity - in particular, causal set theory \cite{surya2019causal, baron2025causal} and the causal fermions formalism \cite{finster2016continuum, fischer2025causal}. It would be remarkable to see whether our formalism here may complement the above-mentioned quantum gravity theories to operationalize statistical mechanics over posets - more specifically, to determine the local algebra of quantum observables on spacetime. It is worth noting that a different realization of statistical mechanics over causal structure has also recently been reported in \cite{kafker2025non}, where the investigation was focused on classical thermodynamic observables. Our methods here are complementary to that study.

To conclude, our investigations here lend further evidence towards the utility of examining quantum theory from symbolic or other inherently non-quantum underpinnings. Efforts involving systems that are deterministic \cite{thooft}, classical statistical \cite{bondar2012operational, klein2018koopman, klein2020probabilistic, pehle2018emulating, wetterich2024probabilistic}, and pre-geometric \cite{fiorenza2013higher, arsiwalla2024operator, chester2024quantization, chester2025preons, tavanfar2018holoquantum} have already been fruitful in extracting key features of quantum theory. This also suggests avenues for a broader realization of quantum process theories and quantum-like phenomena (for instance, see \cite{arsiwalla2024operator, tavanfar2023unitary, tavanfar2025does}). Interestingly, non-spatiotemporal formalisms of quantum process theories that are based on monoidal categorical structures and diagrammatic reasoning have also found applications to a wide range of domains from natural language semantics \cite{coecke2010mathematical, coecke2021mathematics} to quantum cognition \cite{signorelli2018moral}.  But beyond applications, non-spatiotemporal formalisms of quantum theory strongly point towards the deeper question of what the bare essence of ``quantum-ness'' should be, both, for physical as well as abstract systems like language or cognition; and whether the conventional algebraic structures of quantum theory themselves have a pre-quantum origin (see \cite{arsiwalla2024operator} for one such formulation). A plausible perspective on the latter is that \emph{finite-dimensional quantum operators may be realized via the statistical mechanics of computation}.



\section*{Conflicts of Interest}

The authors declare no conflicts of interest.

\section*{Data Availability Statement}

Code related to the present study can be accessed on GitHub:

\begin{quote}
    \url{https://github.com/fsdundar/qstr-monads}
\end{quote}

\bibliographystyle{eptcs}
\bibliography{refs}

\end{document}